\setlist{nosep}
\setlist[1]{labelindent=\parindent} 
\setlist[enumerate,1]{label={\rm(\arabic*)}, ref={\rm(\arabic*)}}
\setlist[itemize,1]{label=--}
\setlist[description]{font=\sffamily\bfseries}
\theoremstyle{plain}
\newtheorem{proposition}{Proposition}
\newtheorem{lemma}{Lemma}
\newtheorem{theorem}{Theorem}
\newtheorem{corollary}{Corollary}
\theoremstyle{definition}
\newtheorem{definition}{Definition}
\theoremstyle{remark}
\newtheorem{remark}{Remark}
\def\bs#1{\boldsymbol{#1}}
\def\nil{\bs\epsilon}
\newcommand{\slice}[3][]{_{\ifx\\#1\\\else\strut\fi#2\mathbin{..}#3}}
\newcommand{\AT}[1]{\textit{\bfseries#1}}
\newcommand{\Cat}[1]{\mathbb#1}
\newcommand{\onm}[1]{\mathsf{#1}}
\newcommand{\cat}[1]{\bm{\mathsf{#1}}}
\DeclareMathOperator{\dom}{dom}
\DeclareMathOperator{\cod}{cod}
\newcommand{\id}[1][]{\mathop{\mathrm{id}\ifx\\#1\\\else_{#1}\fi}}
\newcommand{\Fnr}[1]{\mathop{\boldsymbol#1\vphantom{|}}}
\newcommand{\fnr}[2][]{\mathop{\mathsf{#2\vphantom{|}}\ifx\\#1\\\else_{#1}\fi}}
\newcommand{\crr}[1]{\underline{#1}}
\newcommand{\tuple}[2]{\langle#1,#2\rangle}
\newcommand{\bigtuple}[2]{\big\langle#1,#2\big\rangle}
\newcommand{\lbparen}{%
  \mathopen{%
    \sbox0{$()$}%
    \setlength{\unitlength}{\dimexpr\ht0+\dp0}%
    \raisebox{-\dp0}{%
      \begin{picture}(.32,1)
      \linethickness{\fontdimen8\textfont3}
      \roundcap
      \put(0,0){\raisebox{\depth}{$($}}
      \polyline(0.32,0)(0,0)(0,1)(0.32,1)
      \end{picture}%
    }%
  }%
}
\newcommand{\rbparen}{%
  \mathclose{%
    \sbox0{$()$}%
    \setlength{\unitlength}{\dimexpr\ht0+\dp0}%
    \raisebox{-\dp0}{%
      \begin{picture}(.32,1)
      \linethickness{\fontdimen8\textfont3}
      \roundcap
      \put(-0.08,0){\raisebox{\depth}{$)$}}
      \polyline(0,0)(0.32,0)(0.32,1)(0,1)
      \end{picture}%
    }%
  }%
}
\newcommand{\ana}[1]{\lbparen#1\rbparen}
\newcommand{\pows}[2][]{\mathcal P\ifx\\#1\\\else_+\fi(#2)}
\newcommand{\N}[1][]{\mathbb N\ifx\\#1\\\else_{#1}\fi}
\newcommand{\Ntf}[1][]{\mathbf{N}\ifx\\#1\\\else\hspace{-1pt}\onm #1\fi}
\newcommand{\ext}[1]{\id[\onm1]+#1}
\newcommand{\tr}[1]{#1_\mathrm{tr}}
\newcommand{\out}[1]{#1_\mathrm{out}}
\newcommand{\pr}[1][]{\mathop{\mathrm{pr}\vphantom{|}}\ifx\\#1\\\else_{#1}\fi}
\def\fault{\mathop{\Downarrow}}
\renewcommand{\S}[1][]{\fnr{S_{\Ntf[#1]}}}
\newcommand{\D}[1][]{\fnr{D_{\Ntf[#1]}}}
\newcommand{\dr}[1]{\mathfrak{#1}}
\newcommand{\Sys}[1]{\cat{Sys}(#1)}
\newcommand{\fic}[1]{\text{\bf\textnu}\!#1}
\newcommand{\Nxt}[1][]{(\mathbf1+\fnr{I\!d}_\cat{Set})^\Ntf{\ifx\\#1\\\else#1\fi}}
\newcommand{\app}[2]{\mathop{\strut#1}#2}
\newcommand{\labs}[2]{\lambda\ #1\mathbin{.}#2}
\newcommand{\constr}[2]{\llbracket#1\rrbracket\ifx\\#2\\\else_{#2}\fi}
\title{Understanding Safety Constraints Coalgebraically}
\author{
    Grygoriy Zholtkevych\thanks{%
        The author thanks professors R. de Simone, F. Mallet, and L. Liquori for a detailed discussion of the problems related to this paper during his internship at Inria Sophia Antipolis - M\'{e}di and Campus France for funding this internship.} \\
    Department of Theoretical and Applied Computer Science\\
    School of Mathematics and Computer Science \\
    V.N. Karazin Kharkiv National University\\
    4, Svobody Sqr., Kharkiv, 61022, Ukraine \\
    \texttt{g.zholtkevych@karazin.ua} \\
    \And
    Maksym Labzhaniia \\
    Department of Theoretical and Applied Computer Science\\
    School of Mathematics and Computer Science \\
    V.N. Karazin Kharkiv National University\\
    4, Svobody Sqr., Kharkiv, 61022, Ukraine \\
    \texttt{m.labzhaniia@gmail.com}
}
\begin{document}
\maketitle

\begin{abstract}
Safety constraints are crucial to the development of mission-critical systems.
The practice of developing software for systems of this type requires reliable methods for identifying and analysing project artefacts.
This paper proposes a coalgebraic approach to understanding behavioural constraints for systems of a kind.
The advantage of the proposed approach is that it gives a framework for providing abstract semantic models of the domain-specific languages designed for specifying behavioural constraints.

\keywords{behavioural constraint \and safety constraint \and coalgebra \and final coalgebra \and coalgebraical semantic model \and final semantics}
\end{abstract}

\section{Introduction}
A lot of modern technical systems are compound and smart.
Moreover, they are hybrid in the sense that ones consisted of both physical and cybernetic (software) components.
In other words, we can state that modern technical systems of such a kind are cyber-physical systems (see, for example, \cite{bib:lee}).
This requires the corresponding approaches to designing these systems.
First of all, we need to remark that software complex of such a system contains necessarily reactive components i.e. programs intending rather for providing the required behaviour of the system than for handling data \cite{bib:mana-phueli}.
This is because of the incorrect behaviour of a complex technical system can have serious and some times catastrophic consequences for the system surroundings.
Thus, we should classify such systems as safety-critical \cite{bib:bowen-stavridou}.
As well-known, specification and analysis of the behavioural requirements is the most critical phase under safety-critical systems development process \cite{bib:grant} taking into account the fact that a most of system faults and errors are consequences of incorrect specifications or of incomplete analysis.
Hence, we need a dependable and mathematically grounded toolkit for specifying and analysing behavioural constraints for designing such systems.

This paper introduces some general framework for constructing rigorous models of safety constraints, which can be used for constructing domain-specific languages for specifying safety constraints.
This framework can be included as a component of the mentioned above development toolkit.

In the paper, we use coalgebraic techniques as the main research methodology.
This choice is motivated by the fact that universal coalgebras give an adequate mathematical tool for modelling behaviour of systems \cite{bib:rutten,bib:jacobs-2017}.

Sec.~\ref{sec:notation} introduces the needed notions and notation for sequences of system notifications.

For completeness and reader convenience, we have collected the used coalgebraic concepts in Sec.~\ref{sec:preliminaries}.

Sec.~\ref{sec:systems} is devoted to studying concrete endofunctors and properties of the coalgebras corresponding to these endofunctors related to safety constraints.

Finally, Sec.~\ref{sec:constraints}
is the central section of the paper.
It contains definitions of the target constructions and results providing achieving of the claimed goals. 
\section{Basic Concepts and Notation}\label{sec:notation}
In this section, we assume that $X$ is a finite set with at least two elements.
Elements of this set are usually interpreted as system notifications.

A mapping (generally speaking partial) $\bs{s}:\N\to X$ is called an $X$-\AT{sequence} if for any $k\in\N$, $\app{\bs{s}}{k'}$ is defined whenever $\app{\bs{s}}{k}$ is defined and $0\leq k'<k$.

An $X$-sequence $\bs s$ is called an $X$-\AT{word} if there exists $k\in\N$ such that $\app{\bs{s}}{k}$ is undefined; in contrast, an $X$-sequence $\bs s$ is called a $X$-\AT{stream} if $\app{\bs{s}}{k}$ is defined for all $k\in\N$.

We use the notation $X^\ast$ for referring to the set of all $X$-words, and $X^{\N}$ for referring to the set of all $X$-streams.
The set $X^\ast$ contains the sequence defined nowhere, which is below denoted by $\nil$.

We use also the notation $X^\infty$ for referring to the set $X^\ast\bigcup X^{\N}$, and $X^+$ for referring to the set of all $X$-words without the word defined nowhere.

For an $X$-word $\bs u$, we denote by $|\bs u|$ the minimal natural number such that $\app{\bs{u}}{|\bs u|}$ is undefined.
\\
The value $|\bs u|$ where $\bs{u}\in X^\ast$ is called \AT{length} of $\bs u$.

As usually, we identify $n\in X$ with the $X$-word $\bs{u}\in X^\ast$ such that $\app{\bs u}{0}=n$ and $\app{\bs u}{k}$ is undefined if $k>0$.

For $\bs{u}\in X^\ast$ and $\bs{s}\in X^\infty$, we denote by $\bs{us}$ the next $X$-sequence
\[
    \bs{us}=\labs{k\in\N}{\left\lbrace\begin{array}{cc}
        \app{\bs u}{k} & \text{if }k<|\bs u| \\
        \app{\bs s}{(k-|\bs u|)} & \text{otherwise}
    \end{array}\right.}
\]

Below we need in the following set
\[
    n^{-1}\cdot A=\{\bs{u}\in X^\ast\mid n\bs{u}\in A\}\qquad\text{where }A\subset X^\ast\text{ and }n\in X.
\]

For $\bs{s}\in  X^\infty$ and $m\in\N$, we denote by $\bs{s}\slice{m}{}$ the next $X$-sequence
\[
    \bs{s}\slice{m}{}=\labs{k\in\N}{\left\lbrace\begin{array}{cc}
        \app{\bs s}{(k+m)} & \text{if this value is defined} \\
        \text{ is undefined} & \text{otherwise}
    \end{array}\right.}
\]

Also for $\bs{s}\in X^\infty$ and $m,l\in\N$, we denote by $\bs{s}\slice{m}{l}$ the next $X$-word
\[
    \bs{s}\slice{m}{l}=\labs{k\in\N}{\left\lbrace\begin{array}{cc}
        \app{\bs s}{(k+m)} & \text{if this value is defined and }k<l-m \\
        \text{ is undefined} & \text{otherwise}
    \end{array}\right.}
\]

The principal concepts for our studying is given by the following definitions
\begin{definition}
A subset $P\subset X^\ast$ is called \AT{prefix-free} if $\bs{u}\in P$ ensures $\bs{u}\slice{0}{m}\notin P$ whenever $0\leq m<|\bs u|$.
\end{definition}
\begin{remark}
If a prefix-free subset of $X^\ast$ contains $\nil$ then this subset is $\{\nil\}$.
Indeed, if a prefix-free subset of $X^\ast$ contains both $\nil$ and another $X$-word $\bs{u}$ then $\bs{u}\slice{0}{0}=\nil$ cannot belong to this subset.
This contradiction grounds the remark.
\end{remark}
\begin{definition}[see \cite{bib:alpern-schneider}]
A \AT{safety constraint} is a subset $S\subset X^{\N}$ such that $\bs{s}\in S$ if for any $m\in\N$, $\bs{s}\slice[*]{0}{m}=\bs{s}'\slice[*]{0}{m}$ for some $\bs{s}'\in S$.
\end{definition}

\section{Coalgebras Preliminaries}\label{sec:preliminaries}
In this section, we remind the basic definitions and facts related to the concept of a coalgebra in an arbitrary category.
In addition, we give some specific concepts in the case when $\Cat{C}=\cat{Set}$.

Thus, we assume the category $\Cat{C}$ and the endofunctor $\Fnr{F}$ of $\Cat{C}$ are given and held fixed in this section\footnote{%
General information on category theory can be found in \cite{bib:maclane,bib:awodey}
}.
\begin{definition}\label{def:coalgebra}
A morphism $a$ of $\Cat{C}$ is called an $\Fnr{F}$-\AT{coalgebra} if $\cod a=\Fnr{F}(\dom a)$.
In the case, $\dom a$ is called the \AT{carrier} of $a$ and denoted below by $\crr a$.
\end{definition}
\begin{definition}\label{def:coalgebra-morphism}
Let $a$ and $b$ be $\Fnr{F}$-coalgebras then a morphism $f\colon\crr a\to\crr b$ is called an $\Fnr{F}$-\AT{morphism} from $a$ into $b$ (symbolically, $f:a\to b$) if the diagram
\[\begin{diagram}
    \node{\crr a}
        \arrow{e,t}{f}
        \arrow{s,l}{a}
    \node{\crr b}
        \arrow{s,r}{b}\\
    \node{\Fnr{F}\crr a}
        \arrow{e,t}{\Fnr{F}f}
    \node{\Fnr{F}\crr b}
\end{diagram}
\quad\text{commutes or, equivalently, the equation}\quad\app{(\Fnr{F}f)}{a}=\app{b}{f}\quad\text{holds}.
\]
\end{definition}
\begin{proposition}
The class of $\Fnr{F}$-coalgebras equipped with $\Fnr{F}$-morphisms is a category denoted usually by $\cat{Coalg}_{\Fnr{F}}(\Cat{C})$ or $\cat{Coalg}_{\Fnr{F}}$ if the category $\Cat{C}$ is clear from the context.
\end{proposition}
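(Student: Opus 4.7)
The plan is to verify the three data of a category: identity morphisms, composition, and the identity/associativity laws. Since the carriers of $\Fnr{F}$-coalgebras are objects of $\Cat{C}$ and $\Fnr{F}$-morphisms are just $\Cat{C}$-morphisms satisfying the commuting square of Definition~\ref{def:coalgebra-morphism}, the associativity and unit axioms will come for free from $\Cat{C}$ once I have shown that identities and composites stay inside the subclass of $\Fnr{F}$-morphisms. So the substantive content reduces to two closure checks.

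First I would check that for any $\Fnr{F}$-coalgebra $a$, the $\Cat{C}$-identity $\id[\crr a]\colon\crr a\to\crr a$ is an $\Fnr{F}$-morphism $a\to a$. By functoriality of $\Fnr{F}$, we have $\Fnr{F}\id[\crr a]=\id[\Fnr{F}\crr a]$, so both $\app{(\Fnr{F}\id[\crr a])}{a}$ and $\app{a}{\id[\crr a]}$ collapse to $a$, and the defining equation holds trivially.

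Second, given $\Fnr{F}$-morphisms $f\colon a\to b$ and $g\colon b\to c$, I would show that $\app{g}{f}\colon\crr a\to\crr c$ is an $\Fnr{F}$-morphism $a\to c$. Using functoriality of $\Fnr{F}$ once more, $\Fnr{F}(\app{g}{f})=\app{(\Fnr{F}g)}{(\Fnr{F}f)}$. The verification is then a direct paste-together of the two commuting squares:
\[
\app{\bigl(\Fnr{F}(\app{g}{f})\bigr)}{a}
 = \app{(\Fnr{F}g)}{\app{(\Fnr{F}f)}{a}}
 = \app{(\Fnr{F}g)}{\app{b}{f}}
 = \app{\app{(\Fnr{F}g)}{b}}{f}
 = \app{\app{c}{g}}{f}
 = \app{c}{(\app{g}{f})},
\]
where the middle equalities use, in turn, the coalgebra squares for $f$ and for $g$.

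Finally, the associativity of composition and the unit laws for $\Fnr{F}$-morphisms are inherited verbatim from $\Cat{C}$, since composition of $\Fnr{F}$-morphisms is composition of the underlying $\Cat{C}$-morphisms. There is no real obstacle here; the only conceptual point is the essential use of functoriality of $\Fnr{F}$ in both closure checks, without which neither the identity nor composition would automatically satisfy the coalgebra square.
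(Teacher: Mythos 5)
Your verification is correct and is exactly the routine check of the category axioms that the paper leaves implicit (as it does explicitly for the analogous statement about $\Sys{\D}$): closure of $\Fnr{F}$-morphisms under identities and composition via functoriality of $\Fnr{F}$, with associativity and unit laws inherited from $\Cat{C}$. Nothing is missing.
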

\begin{definition}\phantom{x}\par
\begin{enumerate}
\item 
A terminal object of $\cat{Coalg}_{\Fnr{F}}$ if it exists is called a \AT{final} $\Fnr{F}$-coalgebra, which is denoted by $\fic{\Fnr{F}}$.
\item
For any $\Fnr{F}$-coalgebra $a$, the unique $\Fnr{F}$-morphism from $a$ into $\fic{\Fnr{F}}$ is called an \AT{anamorphism} and denoted by $\ana{a}$\,.
\end{enumerate}
\end{definition}
The concept of bisimulation is a key concept in the theory of universal coalgebras.
\begin{definition}[see P. Aczel and N. Mendler \cite{bib:aczel-mendler}]\label{def:bisimulation}
A bisimulation of $\Fnr{F}$-coalgebras $a$ and $b$ is a span $a\xleftarrow{r_a}r\xrightarrow{r_b}b$ in category $\cat{Coalg}_{\Fnr{F}}(\Cat{C})$ such that the span $\crr a\xleftarrow{r_a}\crr r\xrightarrow{r_b}\crr b$ in category $\Cat{C}$ is a mono-span\footnote{%
A span $X\xleftarrow{f}R\xrightarrow{g}Y$ in an arbitrary category is a mono-span if for any morphisms $h',h'':Z\to R$, equations $fh'=fh''$ and $gh'=gh''$ ensure $h'=h''$.}.
\end{definition}
The next proposition demonstrates that coalgebra morphisms give the simplest examples of bisimulations.
\begin{proposition}
For any $\Fnr{F}$-coalgebras $a$ and $b$ and $\Fnr{F}$-morphism $f:a\to b$, the span $a\xleftarrow{\id[a]}a\xrightarrow{f}b$ is a bisimulation of $a$ and $b$.
\end{proposition}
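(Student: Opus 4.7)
The plan is to unpack Definition~\ref{def:bisimulation} directly and check its two ingredients for the proposed span $a\xleftarrow{\id[a]}a\xrightarrow{f}b$. This splits the work into (i) verifying that $\id[a]$ and $f$ are both $\Fnr{F}$-morphisms, so the span actually lives in $\cat{Coalg}_{\Fnr{F}}(\Cat{C})$, and (ii) verifying that the underlying span $\crr a\xleftarrow{\id[\crr a]}\crr a\xrightarrow{f}\crr b$ in $\Cat{C}$ is a mono-span.

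For step (i), the left leg is the identity, and the naturality square of Definition~\ref{def:coalgebra-morphism} collapses to the equation $\app{(\Fnr{F}\id[\crr a])}{a}=\app{a}{\id[\crr a]}$, which follows from functoriality of $\Fnr{F}$. The right leg is an $\Fnr{F}$-morphism by hypothesis. No further calculation is needed here.

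For step (ii), I have to show that whenever $h',h''\colon Z\to\crr a$ satisfy $\app{\id[\crr a]}{h'}=\app{\id[\crr a]}{h''}$ and $\app{f}{h'}=\app{f}{h''}$, then $h'=h''$. But the first of these equalities already reads $h'=h''$, so the mono-span condition is immediate.

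I do not foresee any real obstacle: both parts follow at once from basic category-theoretic identities and the hypotheses already assumed in the proposition, so the role of the proof is really to spell out how the definition applies rather than to overcome any difficulty. The only minor care needed is to distinguish between the identity morphism on the coalgebra $a$ in $\cat{Coalg}_{\Fnr{F}}(\Cat{C})$ and the identity morphism $\id[\crr a]$ on its carrier in $\Cat{C}$, which is precisely the morphism tested in the mono-span condition.
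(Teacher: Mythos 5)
Your proposal is correct and follows the same route as the paper: the paper's proof simply notes that the carrier span $\crr a\xleftarrow{\id[\crr a]}\crr a\xrightarrow{f}\crr b$ is evidently a mono-span, which is exactly your step (ii), and your step (i) merely spells out the routine fact (left implicit in the paper) that both legs are $\Fnr{F}$-morphisms. Nothing further is needed.
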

\begin{proof}
We need to note only that the span $\crr a\xleftarrow{\id[\crr a]}\crr a\xrightarrow{f}\crr b$ is evidently a mono-span.
\end{proof}
\begin{proposition}
Assume that category $\Cat{C}$ is finitely complete and there is a final $\Fnr{F}$-coalgebra then for any bisimulation $a\xleftarrow{r_a}r\xrightarrow{r_b}b$ of $\Fnr{F}$-coalgebras $a$ and $b$, there exists a unique monomorphism $f:\crr r\to P$ such that $r_a=p_af$ and $r_b=p_bf$ where $\crr a\xleftarrow{p_a}P\xrightarrow{p_b}\crr b$ is the pullback of the cospan $\crr a\xrightarrow{\ana{a}}\crr{\fic{\Fnr{F}}}\xleftarrow{\ana{b}}\crr b$.
\end{proposition}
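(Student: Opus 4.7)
The plan is to combine the finality of $\fic{\Fnr{F}}$ with the universal property of the pullback, and then to use the mono-span condition to upgrade the resulting mediating morphism to a monomorphism.

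First I would observe that, since $r_a:r\to a$ and $r_b:r\to b$ are $\Fnr{F}$-morphisms and $\ana{a}$, $\ana{b}$ are anamorphisms, both composites $\app{\ana{a}}{r_a}$ and $\app{\ana{b}}{r_b}$ are $\Fnr{F}$-morphisms from $r$ into $\fic{\Fnr{F}}$. By the uniqueness clause defining anamorphisms, both coincide with $\ana{r}$. Consequently the outer diagram
\[\begin{diagram}
    \node{\crr r}
        \arrow{e,t}{r_b}
        \arrow{s,l}{r_a}
    \node{\crr b}
        \arrow{s,r}{\ana{b}}\\
    \node{\crr a}
        \arrow{e,t}{\ana{a}}
    \node{\crr{\fic{\Fnr{F}}}}
\end{diagram}\]
commutes in $\Cat{C}$, so $(\crr r, r_a, r_b)$ is a cone over the cospan $\crr a\xrightarrow{\ana{a}}\crr{\fic{\Fnr{F}}}\xleftarrow{\ana{b}}\crr b$.

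Next, since $\Cat{C}$ is finitely complete, the pullback $\crr a\xleftarrow{p_a}P\xrightarrow{p_b}\crr b$ of this cospan exists, and its universal property yields a unique morphism $f:\crr r\to P$ satisfying $\app{p_a}{f}=r_a$ and $\app{p_b}{f}=r_b$. Uniqueness of $f$ with these two factorisation equations is immediate from the pullback universal property, so only monicity of $f$ remains to be verified.

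Finally, the hard (but short) part is the monicity argument, and it is exactly where Definition~\ref{def:bisimulation} is used: suppose $h',h'':Z\to\crr r$ satisfy $\app{f}{h'}=\app{f}{h''}$. Post-composing with $p_a$ and with $p_b$ yields $\app{r_a}{h'}=\app{r_a}{h''}$ and $\app{r_b}{h'}=\app{r_b}{h''}$, so the mono-span property of $\crr a\xleftarrow{r_a}\crr r\xrightarrow{r_b}\crr b$ forces $h'=h''$. Hence $f$ is a monomorphism, completing the proof. I do not anticipate any serious obstacle; the only delicate point is remembering that the outer square commutes by finality, not by direct computation.
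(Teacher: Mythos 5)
Your proof is correct and follows essentially the same route as the paper's: finality forces $\app{\ana{a}}{r_a}=\app{\ana{b}}{r_b}$ so the outer square commutes, the pullback's universal property gives the unique mediating $f$, and the mono-span condition yields monicity after post-composing with $p_a$ and $p_b$. Your write-up is in fact more explicit than the paper's, which leaves the final monicity check to the reader.
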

\begin{proof}
Indeed, the definition of an anamorphism ensures the commutativity of the next diagram
\[\begin{diagram}
    \node{\crr r}
        \arrow{e,t}{r_b}
        \arrow{s,l}{r_a}
            \node{\crr b}
                \arrow{s,r}{\ana{b}} \\
    \node{\crr a}
        \arrow{e,t}{\ana{a}}
            \node{\crr{\fic{\Fnr{F}}}}
\end{diagram}    
\]
i.e. for the pullback $\crr a\xleftarrow{p_a}P\xrightarrow{p_b}\crr b$ of the cospan $\crr a\xrightarrow{\ana{a}}\crr{\fic{\Fnr{F}}}\xleftarrow{\ana{b}}\crr b$, we have the existence of a unique morphism $f:\crr r\to P$.
One can see that this morphism is a monomorphism taking into account that the mono-span and pullback properties. 
\end{proof}
This statement can be inverted for some class of endofunctors.
\begin{proposition}\label{prop:largest-bisimulation}
If category $\Cat{C}$ is finitely complete, endofunctor $\Fnr{F}$ preserves pullbacks, and there exists a final $\Fnr{F}$-coalgebra then the pullback of the cospan $\crr a\xrightarrow{\ana{a}}\crr{\fic{\Fnr{F}}}\xleftarrow{\ana{b}}\crr b$ is the greatest bisimulation of $\Fnr{F}$-coalgebras $a$ and $b$.
\end{proposition}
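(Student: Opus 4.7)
The plan is to equip the pullback with an $\Fnr F$-coalgebra structure, observe that the resulting span is a bisimulation, and then check that any other bisimulation factors through it.

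Let $P$ with legs $p_a\colon P\to\crr a$ and $p_b\colon P\to\crr b$ be the pullback of $\crr a\xrightarrow{\ana a}\crr{\fic{\Fnr F}}\xleftarrow{\ana b}\crr b$ in $\Cat C$. The first step is to lift $P$ to an $\Fnr F$-coalgebra. Since $\Fnr F$ preserves pullbacks, $\Fnr FP$ together with $\Fnr Fp_a$ and $\Fnr Fp_b$ is the pullback of $\Fnr F\crr a\xrightarrow{\Fnr F\ana a}\Fnr F\crr{\fic{\Fnr F}}\xleftarrow{\Fnr F\ana b}\Fnr F\crr b$. Writing $\zeta$ for the structure map of the final coalgebra, the anamorphism equations $\Fnr F\ana a\circ a=\zeta\circ\ana a$ and $\Fnr F\ana b\circ b=\zeta\circ\ana b$ together with the pullback relation $\ana a\circ p_a=\ana b\circ p_b$ give $\Fnr F\ana a\circ a\circ p_a=\Fnr F\ana b\circ b\circ p_b$, so the universal property of the pullback $\Fnr FP$ supplies a unique $p\colon P\to\Fnr FP$ satisfying $\Fnr Fp_a\circ p=a\circ p_a$ and $\Fnr Fp_b\circ p=b\circ p_b$. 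These equations say exactly that $p_a$ and $p_b$ are $\Fnr F$-morphisms $p\to a$ and $p\to b$.

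The second step is immediate: the span $a\xleftarrow{p_a}p\xrightarrow{p_b}b$ has underlying carrier span equal to a pullback in $\Cat C$, whose legs are automatically jointly monic, so Definition~\ref{def:bisimulation} is met. For maximality, let $a\xleftarrow{r_a}r\xrightarrow{r_b}b$ be any bisimulation. The previous proposition supplies a unique monomorphism $f\colon\crr r\to P$ with $p_a\circ f=r_a$ and $p_b\circ f=r_b$; it remains to verify that $f$ is itself an $\Fnr F$-morphism $r\to p$. Composing with the pullback legs of $\Fnr FP$, one computes $\Fnr Fp_a\circ\Fnr Ff\circ r=\Fnr Fr_a\circ r=a\circ r_a=a\circ p_a\circ f=\Fnr Fp_a\circ p\circ f$ and symmetrically for $p_b$; joint monicity of $\Fnr Fp_a$ and $\Fnr Fp_b$ then forces $\Fnr Ff\circ r=p\circ f$.

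The main obstacle is the first step: the coalgebra structure on $P$ exists only because $\Fnr F$ preserves the relevant pullback, which is precisely the hypothesis that distinguishes this proposition from the preceding one. Once that structure is in place, both bisimulation-hood and the upgrade of the mediating monomorphism to an $\Fnr F$-morphism are routine diagram chases powered by joint monicity of pullback legs.
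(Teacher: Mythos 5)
Your proof is correct and follows essentially the same route as the paper's: you use preservation of pullbacks to obtain the coalgebra structure $p\colon P\to\Fnr{F}P$ from the universal property of $\Fnr{F}P$, and you note that the pullback legs form a mono-span, exactly as in the paper's construction of $\rho$. The only difference is that you additionally verify that the mediating monomorphism from an arbitrary bisimulation is itself an $\Fnr{F}$-morphism via joint monicity of $\Fnr{F}p_a$ and $\Fnr{F}p_b$ --- a useful completion of the maximality claim that the paper leaves implicit, deferring to its preceding proposition.
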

\begin{proof}
Indeed, let $\crr a\xleftarrow{p_a}P\xrightarrow{p_b}\crr b$ be the mentioned above pullback then it is mono-span and the next diagram
\[\begin{diagram}
    \node{\Fnr{F}P}
        \arrow{s,l}{\Fnr{F}p_a}
        \arrow{e,t}{\Fnr{F}p_b}
            \node{\Fnr{F\crr b}}
                \arrow{s,r}{\Fnr{F}\ana{b}} \\
    \node{\Fnr{F}\crr a}
        \arrow{e,t}{\Fnr{F}\ana{a}}
            \node{\Fnr{F}\crr{\fic{\Fnr{F}}}}
\end{diagram}
\]
is a pullback diagram.
Further, note that the outer square of the following diagram commutes.
\[\begin{diagram}
    \node{P}
        \arrow{se,tb}{\rho}{\exists}
        \arrow[2]{s,l}{p_a}
        \arrow[2]{e,t}{p_b}
            \node[2]{\crr b}
                \arrow{s,r}{b} \\
    \node[2]{\Fnr{F}P}
        \arrow{s,l}{\Fnr{F}p_a}
        \arrow{e,t}{\Fnr{F}p_b}
            \node{\Fnr{F\crr b}}
                \arrow{s,r}{\Fnr{F}\ana{b}} \\
    \node{\crr a}
        \arrow{e,t}{a}
            \node{\Fnr{F}\crr a}
                \arrow{e,t}{\Fnr{F}\ana{a}}
                    \node{\Fnr{F}\crr{\fic{\Fnr{F}}}}
\end{diagram}
\]
Indeed,
\begin{align*}
    \big(\Fnr{F}\ana{a}\,\big)ap_a&=(\fic{\Fnr{F}})\app{\ana{a}}{p_a} &&
        \text{considering that $\ana{a}$ is an $\Fnr{F}$-morphism} \\
    &=(\fic{\Fnr{F}})\app{\ana{b}}{p_b} &&
        \text{using the pullback property for $\crr a\xleftarrow{p_a}P\xrightarrow{p_b}\crr b$} \\
    &=\big(\Fnr{F}\ana{b}\,\big)bp_b &&
        \text{considering that $\ana{b}$ is an $\Fnr{F}$-morphism.}
\end{align*}
Thus, taking into account the pullback property for the inner subdiagram in the diagram above, one can derive the existence of $\rho$ ensuring the diagram commutativity.
But it means that the span $\crr a\xleftarrow{p_a}P\xrightarrow{p_b}\crr b$ can be lifted up to the corresponding span of coalgebras.
\end{proof}
Another concept used below is the concept of subcoalgebra.
\begin{definition}
Let $a\in\cat{Coalg}_{\Fnr{F}}(\Cat{C})$ and $j:\crr j\to\crr a$ be a monomorphism in $\Cat{C}$ with the domain $\crr j$ then an $\Fnr{F}$-coalgebra $c$ is called a \AT{subcoalgebra} of $a$ if $\crr c=\crr j$ and $j$ is lifted upto a coalgebraic monomorphism $j:c\to a$.
\end{definition}

We complete this section by enumerating a series of facts specific for the category $\cat{Set}$ and endofunctors of this category.
If $\Cat{C}=\cat{Set}$ then an $\Fnr{F}$-coalgebra is called an $\Fnr{F}$-\AT{system}\footnote{%
Due to J. Rutten \cite{bib:rutten}} and the corresponding category is denoted below by $\Sys{\Fnr{F}}$ instead of $\cat{Coalg}_{\Fnr{F}}(\cat{Set})$.

It is well-known that category $\cat{Set}$ is finitely complete.
The important class of endofunctors of the category $\cat{Set}$ is called the class of polynomial endofunctors and defined as follows (see, for example, \cite{bib:jacobs-2017})
\begin{enumerate}\item a constant endofunctor i.e. an endofunctor $\Fnr{C}X=\onm{C}$ for some set $\onm C$ and $\Fnr{C}f=\id[\onm C]$ is a polynomial endofunctor;
\item if $\Fnr{F}_1$ and $\Fnr{F}_2$ are polynomial endofunctors then the endofunctor $\Fnr{F}X=\Fnr{F}_1X\times\Fnr{F}_2X$ and $\Fnr{F}f=\Fnr{F}_1f\times\Fnr{F}_2f$ is a polynomial endofunctor;
\item if $\Fnr{F}_1$ and $\Fnr{F}_2$ are polynomial endofunctors then the endofunctor $\Fnr{F}X=\Fnr{F}_1X+\Fnr{F}_2X$ and $\Fnr{F}f=\Fnr{F}_1f+\Fnr{F}_2f$ is a polynomial endofunctor\footnote{%
    Here and below, ``$+$'' refers to a disjunctive sum of sets.}.
\end{enumerate}
An important property of polynomial endofunctor is that such an endofunctor preserves pullbacks.
\section{Systems as Coalgebras}\label{sec:systems}
In this section, we introduce and study some endofunctors, systems related to the endofunctors, and their property.
Note that all considered here endofunctors are polynomial.
\subsection{Discrete Dynamical Systems}\label{subsec:T-system}
The simplest manner for modelling a discrete dynamical system is to represent it by a pair $(X,g)$ where $X$ is a set called the \AT{state set} and $g:X\to X$ is a mapping called the \AT{dynamics}.
\\
It is evident that $g$ can be considered as $\fnr[\cat{Set}]{Id}$-coalgebra of the category $\cat{Set}$ and, in this context, $X$ denoted by $\crr g$.
\\
Morphisms of such coalgebras are mappings that intertwine the dynamics of the corresponding coalgebras.
More precise, for $\fnr{Id}_{\cat{Set}}$-coalgebras $g$ and $h$, a morphism $f:g\to h$ is a mapping $f:\crr g\to\crr h$ such that $fg=hf$.

Easy seen that the final coalgebra exists in this case but it is trivial namely $\crr{\fic{\fnr{Id}_{\cat{Set}}}}=\onm1$ and $\fic{\fnr{Id}_{\cat{Set}}}=\id[\onm1]$.
Thus, taking into account that category $\cat{Set}$ and functor $\fnr{Id}_{\cat{Set}}$ satisfy conditions of Prop.~\ref{prop:largest-bisimulation} one can conclude that any two dynamical systems are bisimilar.

A less trivial example is given by the endofunctor $\fnr{T}:\cat{Set}\to\cat{Set}$ defined as follows\footnote{%
    Here and below, $\onm1$ refers to an one-element set $\{\fault\}$ containing a fault indicator.}
\begin{flushleft}
\begin{tabular}{clp{0.5\textwidth}}
    \hspace*{12pt} & $\fnr{T}X=\onm1+X$ & for any object $X$ of category $\cat{Set}$; \\
    \hspace*{12pt} & $\fnr{T}f=\ext f$ & for any morphism $f$ of category $\cat{Set}$.
\end{tabular}
\end{flushleft}
A system of such a kind is called a \AT{system with termination}.

The corresponding class of morphisms (see Def.~\ref{def:coalgebra-morphism}) from a $\fnr{T}$-system $g$ into a $\fnr{T}$-system $h$ contains a mapping $f:\crr g\to\crr h$ if and only if for any $x\in\crr g$, either $gx=\fault$ and $h(fx)=\fault$ or $gx\neq\fault$ and $f(gx)=h(fx)$.

There is a final $\fic{\fnr{T}}$ in the category $\Sys{\fnr{T}}$.
This object is structured as follows
\[\begin{array}{llcl}
    \crr{\fic{\fnr{T}}} &=1+\N & \text{where}& 1+\N=\N\bigcup\{\infty\}\\
    \fic{\fnr{T}} &= \labs{x\in1+\N}{\left\lbrace\begin{array}{cl}
        \fault & \text{if }x=0 \\
        \infty & \text{if }x=\infty \\
        x-1 & \text{otherwise}
    \end{array}\right.}
\end{array}
\]
For a $\fnr{T}$-system $g$, the corresponding anamorphism $\ana{g}$ is defined as follows
\[
    \ana{g}=\labs{x\in\crr g}{\left\lbrace\begin{array}{cl}
        \infty & \text{if }g^{(k)}x\neq\fault\text{ for all }k\in\N[+] \\
        \min\{k\in\N\mid g^{(k+1)}x=\fault\} & \text{otherwise}
    \end{array}\right.}
\]
where\footnote{%
Everybody familiar with the concept of a monad can easily see that $g^{(k)}$ is the $k^{\mathrm{th}}$-power of $g$ in the corresponding Kleisli category.}
\begin{eqnarray*}
    &g^{(1)}&=g \\
    &g^{(k+1)}&=\labs{x\in\crr g}{\left\lbrace\begin{array}{cl}
        \fault & \text{if }g^{(k)}x=\fault \\
        g(g^{(k)}x) & \text{otherwise} 
    \end{array}\right.}\qquad\text{for }k\in\N[+].
\end{eqnarray*}
\subsection{Systems with Output}\label{subsec:observed-systems}
In this subsection, we consider the class of dynamical systems equipped with a mechanism for the external monitoring of the current system state.
We use the term a \AT{system with output} for referring to a system of this class.
Following the coalgebraic approach, we model systems of this class as coalgebras, which signature is defined by the corresponding endofunctor $\S:\cat{Set}\to\cat{Set}$ that is defined as follows
\begin{flushleft}
\begin{tabular}{cll}
    \hspace*{12pt} & $\S X=\Ntf\times X$ & for any object $X$ of category $\cat{Set}$; \\
    \hspace*{12pt} & $\S f=\id[{\Ntf}]\times f$ &%
        for any objects $X$ and $Y$ of category $\cat{Set}$ and a morphism $f:X\to Y$.
\end{tabular}
\end{flushleft}
In this definition, $\Ntf$ refers to a finite set of possible output values.

Thus, a system with output is a mapping $\sigma:\crr\sigma\to\S\crr\sigma$ where $\crr\sigma$ is a set called usually the system \AT{state set}.
Taking into account the universality of the product (see, \cite[Sec.~III.4]{bib:maclane} or \cite[Sec.~2.4]{bib:awodey}), one can see that an $\S$-system $\sigma$ is uniquely represented as $\sigma=\tuple{\out{\sigma}}{\tr{\sigma}\rangle}$ where $\out\sigma=\pr_{\Ntf}\sigma:\crr\sigma\to\Ntf$ and $\tr\sigma=\pr_{\crr\sigma}\sigma:\crr\sigma\to\crr\sigma$ called respectively the \AT{output and transition functions} of the system.

The general coalgebraic framework (see Def.~\ref{def:coalgebra-morphism}) gives the following concept of an $\S$-morphism: for $\S$-system $\sigma$ and $\tau$, a mapping $f:\crr\sigma\to\crr\tau$ is called an $\S$-morphism if $(\out{\tau})f=\out{\sigma}$ and $(\tr{\tau})f=f(\tr{\sigma})$.

There is a final object $\fic{\S}$ in the category $\Sys{\S}$.
This object is structured as follows
\begin{flushleft}
\begin{tabular}{cll}
    \hspace*{12pt} & $\crr{\fic{\S}}$ & $=\Ntf^{\N}$; \\
    \hspace*{12pt} & $\out{(\fic{\S})}$ & $=\labs{s\in\Ntf^{\N}}{\app{s}{0}}$; \\
    \hspace*{12pt} & $\tr{(\fic{\S})}$ & $=\labs{s\in\Ntf^{\N}}{\labs{k\in\N}{\app{s}{(k+1)}}}$.
\end{tabular}
\end{flushleft}
For an $\Ntf$-system $\sigma$, the corresponding anamorphism $\ana{\sigma}$ is defined by the next formula
\[
    \ana{\sigma}=\labs{x\in\crr\sigma}{\labs{k\in\N}{{\out{\sigma}(\tr{\sigma}^kx)}}}.
\]

Thus, the proposed model allows considering a point of the final $\Ntf$-system carrier as an observed behaviour of the system being studied.
It means that we are interested in specifying and analysing constraints that allow distinguishing an admissible and inadmissible system behaviour.

\subsection{Detectors of Behavioural Violations}
The remaining part of the paper is devoted to demonstrating that the safeness of a subset can be described with using the category-theoretic language.

In this subsection, we introduce some class of systems that used as a tool for distinguishing admissible an inadmissible system behaviours.
We refer to systems of this class as detectors of behavioural violations.

This class is determined with the endofunctor $\D:\cat{Set}\to\cat{Set}$ defined as follows
\begin{flushleft}
\begin{tabular}{cll}
    \hspace*{12pt} & $\D X=(\onm1+X)^{\Ntf}$ & for any object $X$ of category $\cat{Set}$; \\
    \hspace*{12pt} & $\D f=\labs{\phi\in(\onm1+X)^{\Ntf}}{\big(\labs{n\in\Ntf}{(\ext{f})(\app{\phi}{n}})\big)}$%
        & for any objects $X$ and $Y$ of category $\cat{Set}$ and \\
    &&\hspace*{12pt}a morphism $f:X\to Y$.
\end{tabular}
\end{flushleft}

We call below a $\D$-system a \AT{detector} and for a detector $\dr a$, we refer to $\crr{\dr a}$ as to the detector \AT{state set}.

A detector morphism $f$ from a detector $\dr a$ into a detector $\dr b$ is (compare with Def.~\ref{def:coalgebra-morphism}) a mapping $f:\crr{\dr a}\to\crr{\dr b}$ such that for any $x\in\crr{\dr a}$ and $n\in\Ntf$,
\begin{enumerate}
    \item $\app{(\dr ax)}{n}=\fault$ if and only if $\app{(\dr b(fx))}{n}=\fault$;
    \item if $\app{(\dr ax)}{n}\neq\fault$ then
        $\app{\big(\dr b(fx)\big)}{n}=f\big(\app{(\dr ax)}{n}\big)$.
\end{enumerate}
\begin{proposition}
The class $\Sys{\D}$ of $\Ntf$-detectors equipped with detector morphisms forms a category.
\end{proposition}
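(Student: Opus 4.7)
The plan is to recognise the statement as a direct instance of the general fact, already recorded earlier in this section, that for any category $\Cat C$ and any endofunctor $\Fnr F$ the class of $\Fnr F$-coalgebras with $\Fnr F$-morphisms forms a category $\cat{Coalg}_{\Fnr F}(\Cat C)$. Since $\D$ is an endofunctor of $\cat{Set}$, applying that proposition with $\Cat C=\cat{Set}$ and $\Fnr F=\D$ yields the category $\cat{Coalg}_{\D}(\cat{Set})$, which by the notational convention established at the end of Sec.~\ref{sec:preliminaries} is exactly what the paper denotes by $\Sys{\D}$.

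The only thing that actually needs to be verified is that the explicit pointwise characterisation of a detector morphism stated immediately before the proposition---conditions (1) and (2) for every $x\in\crr{\dr a}$ and $n\in\Ntf$---is equivalent to the abstract coalgebraic equation $\app{(\D f)}{\dr a}=\app{\dr b}{f}$ from Def.~\ref{def:coalgebra-morphism}. I would perform this verification by unfolding both sides at an arbitrary $x\in\crr{\dr a}$ and then evaluating the resulting elements of $(\onm1+\crr{\dr b})^{\Ntf}$ at an arbitrary $n\in\Ntf$. The right-hand side gives $\app{\big(\dr b(fx)\big)}{n}$; the left-hand side, by the definition of $\D f$, equals $(\ext f)\big(\app{(\dr ax)}{n}\big)$, and the case split on whether $\app{(\dr ax)}{n}=\fault$ reproduces clauses (1) and (2) verbatim, thereby identifying the detector morphisms with the $\D$-morphisms in the sense of Def.~\ref{def:coalgebra-morphism}.

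With this matching in hand, identity morphisms in $\Sys{\D}$ are identity set-maps, composition is ordinary composition in $\cat{Set}$---both trivially preserving conditions (1) and (2)---and associativity together with the unit laws are inherited from $\cat{Set}$. I do not expect any step here to present a genuine obstacle; the only point requiring a little care is the bookkeeping involved in expanding $\D f$ from the lambda-expression used in its definition, after which the result reduces to the already established general proposition.
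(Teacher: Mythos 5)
Your proposal is correct and, in substance, matches the paper, whose proof is just the remark that the result is ``a simple check of the categories axioms'': you make that check explicit by identifying the pointwise conditions (1)--(2) with the equation $\app{(\D f)}{\dr a}=\app{\dr b}{f}$ of Def.~\ref{def:coalgebra-morphism} and then invoking the general fact that $\cat{Coalg}_{\Fnr F}(\Cat C)$ is a category. The unfolding of $\D f$ and the case split on $\app{(\dr ax)}{n}=\fault$ is exactly the right bookkeeping, so no gap remains.
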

\textit{Proof} is a simple check of the categories axioms.\qed

The following theorem describes a final detector.
\begin{theorem}\label{thm:final-detector}
There is a final object in category $\Sys{\D}$ that is structured as follows
\begin{flushleft}
\begin{tabular}{cll}
\hspace*{12pt} & $\crr{\fic{\D}}$ & is the set of all prefix-free subsets of $\Ntf^+$; \\
    \hspace*{12pt} & $\fic{\D}$%
        & $=\labs{P\in\crr{\fic{\D}}}{\labs{n\in\Ntf}{\left\lbrace\begin{array}{cl}
            \fault & \text{if }n\in P \\
            n^{-1}\cdot P & \text{otherwise}
        \end{array}\right.}}$.
\end{tabular}
\end{flushleft}
For an a $\Ntf$-detector, the corresponding anamorphism $\ana{\dr a}$ is defined by the next formula
\[
    \ana{\dr a}=\labs{x\in\crr{\dr a}}{%
        \{\bs{u}\in\Ntf^+\mid\dr a^+(x,\bs{u})=\fault\text{ and }\dr a^+(x,\bs{u}\slice{0}{k})\neq\fault%
        \text{ whenever }0<k<|\bs u|\}}
\]
where $\dr a^+:\crr{\dr a}\times\Ntf^+\to\onm1+\crr{\dr a}$ defined as follows
\begin{align*}
    \dr a^+(x,n)&=\app{(\dr ax)}{n} && x\in\crr{\dr a},\ n\in\Ntf; \\
    \dr a^+(x,\bs{u}n)&=\left\lbrace\begin{array}{cc}
        \fault & \text{if }\dr a^+(x,\bs{u})=\fault \\
        a(\dr a^+(x,\bs{u}),n) & \text{otherwise}
    \end{array}\right. && x\in\crr{\dr a}\,\ n\in\Ntf,\ \bs{u}\in\Ntf^+.
\end{align*}
\end{theorem}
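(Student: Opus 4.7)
The plan is to verify the three standard ingredients of a final-coalgebra claim: (a) the proposed pair $(\crr{\fic{\D}},\fic{\D})$ really is a $\D$-coalgebra, (b) the proposed $\ana{\dr a}$ is a $\D$-morphism $\dr a\to\fic{\D}$, and (c) any $\D$-morphism $f\colon\dr a\to\fic{\D}$ coincides with $\ana{\dr a}$. For well-definedness, I would first observe that whenever $P\in\crr{\fic{\D}}$ and $n\in\Ntf$ with $n\notin P$, the set $n^{-1}\cdot P$ is again prefix-free, because any prefix relation between two elements $\bs u,\bs v\in n^{-1}\cdot P$ lifts to one between $n\bs u,n\bs v\in P$; the two branches of the defining case analysis are disjoint by construction, so $\fic{\D}$ is genuinely a mapping into $(\onm 1+\crr{\fic{\D}})^{\Ntf}$.

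Next I would verify that $\ana{\dr a}(x)$ is prefix-free by the very minimality clause in its definition, so $\ana{\dr a}$ lands in $\crr{\fic{\D}}$. The morphism condition breaks into the two clauses listed after the definition of a detector morphism. When $\app{(\dr a x)}{n}=\fault$ one has $\dr a^+(x,n)=\fault$, so $n\in\ana{\dr a}(x)$ (there are no proper positive-length prefixes to forbid), and thus $\fic{\D}(\ana{\dr a}(x))(n)=\fault$, matching $\ext{\ana{\dr a}}(\fault)=\fault$. When $\app{(\dr a x)}{n}=y\in\crr{\dr a}$, one has $n\notin\ana{\dr a}(x)$ since $\dr a^+(x,n)=y\neq\fault$, so $\fic{\D}(\ana{\dr a}(x))(n)=n^{-1}\cdot\ana{\dr a}(x)$, and I must identify this with $\ana{\dr a}(y)$. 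That identification reduces to the auxiliary identity $\dr a^+(x,n\bs u)=\dr a^+(y,\bs u)$ for all $\bs u\in\Ntf^+$, which follows by a short induction on $|\bs u|$ using the recursive clause defining $\dr a^+$.

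For uniqueness, suppose $f\colon\dr a\to\fic{\D}$ is an arbitrary $\D$-morphism; I would show by induction on $|\bs u|$ that $\bs u\in f(x)\iff\bs u\in\ana{\dr a}(x)$ for every $x\in\crr{\dr a}$. The base case $\bs u=n$ is forced by clause (1) of the morphism condition: $n\in f(x)\iff\fic{\D}(f(x))(n)=\fault\iff\app{(\dr a x)}{n}=\fault$. For the step, with $\bs u=n\bs v$ and $|\bs v|\geq 1$, prefix-freeness of $f(x)$ rules out $n\in f(x)$ when $\bs u\in f(x)$, so clause (2) yields $\app{(\dr a x)}{n}=y\in\crr{\dr a}$ with $n^{-1}\cdot f(x)=f(y)$; applying the induction hypothesis at $y$ and using the identity $\dr a^+(x,n\bs v)=\dr a^+(y,\bs v)$ closes the loop.

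I expect the main obstacle to be purely bookkeeping, namely aligning the prefix-free structure of elements of $\crr{\fic{\D}}$, the two-case definition of a detector morphism, and the recursive extension $\dr a^+$, so that the identity $n^{-1}\cdot f(x)=f(\app{(\dr a x)}{n})$ produced by the morphism condition translates precisely into the analogous identity for $\ana{\dr a}$. Once the $\dr a^+$-shift lemma $\dr a^+(x,n\bs u)=\dr a^+(\app{(\dr a x)}{n},\bs u)$ is in hand, both the morphism check and the uniqueness induction become mechanical.
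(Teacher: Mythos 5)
Your proposal is correct and follows essentially the same route as the paper: well-definedness via the fact that $n^{-1}\cdot P$ remains prefix-free when $n\notin P$ (the paper's Lemma~\ref{lem:prefix-free}), then the morphism check and the uniqueness induction on $|\bs u|$ driven by the shift identity $\dr a^+(x,n\bs u)=\dr a^+(\app{(\dr a x)}{n},\bs u)$, which is exactly the one-letter instance of the paper's Lemma~\ref{lem:mult}. The only cosmetic difference is that the paper treats the case $\app{\ana{\dr a}}{x}=\varnothing$ separately and invokes the decomposition Lemma~\ref{lem:dec} in the uniqueness step, whereas your uniform membership argument absorbs both.
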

The theorem can be derived from \cite[Lemma 6]{bib:jacobs-1996} but we give a direct proof, which details are used below.
The proof is preceded by a statement of the facts being used.
\begin{lemma}\label{lem:mult}
For any $\Ntf$-detector $\dr a$, $x\in\crr{\dr a}$, and $\bs{u},\bs{v}\in\Ntf^+$, the equation
\begin{equation}\label{eq:mult}
    \dr a^+(x,\bs{uv})=\left\lbrace\begin{array}{cc}
        \fault & \text{if }\dr a^+(x,\bs{u})=\fault \\
        \dr a^+\big(\dr a^+(x,\bs{u}),\bs{v}\big) & \text{otherwise} 
    \end{array}\right.
\end{equation}
holds.
\end{lemma}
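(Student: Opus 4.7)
The plan is to prove~\eqref{eq:mult} by induction on $|\bs v|$, which is the natural direction because $\dr a^+$ is defined by recursion on its word argument by appending one letter at a time on the right, so every induction step can consume the last letter of $\bs v$ and invoke the two-branch defining clause exactly once.

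For the base case $|\bs v|=1$, write $\bs v=n\in\Ntf$. Then $\bs{uv}=\bs un$, and the two-branch recursive clause defining $\dr a^+(x,\bs un)$ is \emph{verbatim} the right-hand side of~\eqref{eq:mult}, once one recognises that $\app{(\dr ay)}{n}=\dr a^+(y,n)=\dr a^+(y,\bs v)$ by the single-letter clause for $\dr a^+$.

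For the inductive step, take $\bs v$ of length $k+1\geq 2$ and decompose $\bs v=\bs wn$ with $\bs w\in\Ntf^+$ of length $k$ and $n\in\Ntf$. I split on whether $\dr a^+(x,\bs u)=\fault$. If it is, the induction hypothesis applied to the pair $(\bs u,\bs w)$ yields $\dr a^+(x,\bs{uw})=\fault$, whence the defining clause for $\dr a^+(x,\bs{uw}n)$ returns $\fault$, matching the upper branch of~\eqref{eq:mult}. If instead $y:=\dr a^+(x,\bs u)\neq\fault$, the same IH gives $\dr a^+(x,\bs{uw})=\dr a^+(y,\bs w)$; unfolding the recursion on $\bs{uw}n=\bs{uv}$ and on $\bs wn=\bs v$ in parallel then shows that $\dr a^+(x,\bs{uv})$ and $\dr a^+(y,\bs v)$ are both $\fault$ exactly when $\dr a^+(y,\bs w)=\fault$, and otherwise both equal $\app{(\dr a(\dr a^+(y,\bs w)))}{n}$, which in either case is the lower branch of~\eqref{eq:mult}.

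The argument is essentially bookkeeping; the only thing to watch is the three-way split among (i)~the fault already triggered by $\bs u$, (ii)~a fault arising partway through $\bs v$ after a successful run on $\bs u$, and (iii)~successful propagation through the whole of $\bs{uv}$, together with the need to line each case up with the correct branch of the two-branch right-hand side. I do not anticipate any substantive obstacle.
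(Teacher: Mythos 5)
Your proof is correct and follows essentially the same route as the paper's: induction on the length of $\bs v$, peeling off the last letter $n$ and invoking the two-branch recursive clause for $\dr a^+$, with the case split on where a fault (if any) first occurs. The only cosmetic difference is that the paper disposes of the case $\dr a^+(x,\bs u)=\fault$ up front ``by definition'' and runs the induction under the non-fault assumption, whereas you fold that case into the induction and derive fault propagation from the induction hypothesis -- which is, if anything, slightly tidier, since that propagation is not literally a single unfolding of the definition.
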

\begin{proof}
If $\dr a^+(x,\bs u)=\fault$ then $\dr a^+(x,\bs{uv})=\fault$ by definition of $\dr a^+$.
Hence we below assume $\dr a^+(x,\bs u)\neq\fault$. 
\\
Let us use induction on $\bs v$.
\\
For $\bs{v}=n\in\Ntf$, \eqref{eq:mult} follows from the definition of $\dr a^+$.
\\
Assume that $\bs{v}=\bs{v}'n$, $n\in\Ntf$, and \eqref{eq:mult} holds for $\bs{v}'$.
\\
Then under assumption that $\dr a^+(x,\bs{uv}')=\fault$, we have $\dr a^+(x,\bs{u}(\bs{v}'n))=\dr a^+(x,(\bs{u}\bs{v}')n)=\fault$ by definition of $\dr a^+$.
Hence, $\dr a^+(x,\bs{uv})=\fault$.
\\
%
In the other side, we have $\dr a^+(\dr a^+(x,\bs u),\bs {v}') = \dr a^+(x,\bs{uv}') = \fault$ by induction hypothesis.
But then  $\dr a^+(\dr a^+(x,\bs u),\bs v)=\dr a^+(\dr a^+(x,\bs u),\bs{v}'n) = \fault$  by definition of $\dr a^+$.
\\
Hence, we have $\dr a^+(x,\bs{uv}) =  \dr a^+(\dr a^+(x,\bs u),\bs v)$.
\\
In contrary, assumption that $\dr a^+(x,\bs{uv}')\neq\fault$ gives
\begin{align*}
    \dr a^+(x,\bs{uv})&=\dr a^+(x,\bs{u}(\bs{v}'n))=\dr a^+(x,(\bs{uv}')n)=\app{\big(\app{\dr a}{\dr a^+(x,\bs{uv}')}\big)}{n}
        && \text{by definition of }a^+ \\
    &=\app{\big(\app{\dr a}{\dr a^+(\dr a^+(x,\bs{u}),\bs{v}')}\big)}{n}  && \text{by induction hypothesis} \\
    &=\dr a^+(\dr a^+(x,\bs{u}),\bs{v}'n)=\dr a^+(\dr a^+(x,\bs{u}),\bs{v})
        && \text{by definition of }a^+.
\end{align*}
Thus, the lemma is proved.
\end{proof}
\begin{lemma}\label{lem:prefix-free}
If $P\subset\Ntf^+$ is prefix-free then $n^{-1}\cdot P\subset\Ntf^+$ and it is prefix-free whenever $n\in\Ntf$ and $n\notin P$.
\end{lemma}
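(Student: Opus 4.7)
The plan is to unpack both claims directly from the definitions, since each reduces to a bookkeeping observation about the bijection between prefixes of $n\bs u$ of length $\ge 1$ and prefixes of $\bs u$.

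First I would handle the inclusion $n^{-1}\cdot P\subset\Ntf^+$. By construction $\Ntf^+=\Ntf^\ast\setminus\{\nil\}$, so it suffices to show $\nil\notin n^{-1}\cdot P$. Unwinding the definition, $\nil\in n^{-1}\cdot P$ would mean $n\nil\in P$; but $n\nil$ is the length-one word $n$ (under the identification of $n\in\Ntf$ with the corresponding one-letter word), and the hypothesis $n\notin P$ rules this out.

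For the prefix-freeness of $n^{-1}\cdot P$, I would take any $\bs u\in n^{-1}\cdot P$ and any $m$ with $0\le m<|\bs u|$, and show $\bs u\slice{0}{m}\notin n^{-1}\cdot P$. The key observation is the concatenation identity $n(\bs u\slice{0}{m})=(n\bs u)\slice{0}{m+1}$, together with $|n\bs u|=|\bs u|+1$. Thus $m+1$ satisfies $0\le m+1<|n\bs u|$, so the prefix-freeness of $P$ applied to $n\bs u\in P$ gives $(n\bs u)\slice{0}{m+1}\notin P$, which by the identity translates back to $\bs u\slice{0}{m}\notin n^{-1}\cdot P$.

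There is no genuine obstacle here; the only point requiring care is the index shift by one when passing between prefixes of $n\bs u$ and prefixes of $\bs u$, and making sure the edge case $m=0$ (where $\bs u\slice{0}{0}=\nil$ and the corresponding prefix of $n\bs u$ is the singleton $n$) is consistent with the two parts of the argument. I would state the identity $n(\bs u\slice{0}{m})=(n\bs u)\slice{0}{m+1}$ explicitly, as it is the bridge used repeatedly, and then cite the definition of prefix-free exactly once.
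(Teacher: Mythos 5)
Your proposal is correct and follows essentially the same route as the paper's proof: the inclusion is handled by noting $n\notin P$ forces $\nil\notin n^{-1}\cdot P$, and prefix-freeness rests on the same key identity $n(\bs u\slice{0}{m})=(n\bs u)\slice{0}{m+1}$ combined with the prefix-freeness of $P$. The only cosmetic difference is that you argue directly (contrapositively) where the paper derives a contradiction, which does not change the substance.
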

\begin{proof}
Firstly, $n\notin P$ ensures $\nil\notin n^{-1}\cdot P$ i.e. $n^{-1}\cdot P\subset\Ntf^+$.
Further, if $\bs{u}\in n^{-1}\cdot P$ and $\bs{u}\slice{0}{m}\in n^{-1}\cdot P$ for $0\leq m<|\bs u|$ then $n\bs{u}\in P$ and $n\bs{u}\slice{0}{m}\in P$. Taking into account $n\bs{u}\slice{0}{m}=(n\bs{u})\slice{0}{m+1}$, one can obtain a contradiction, which completes the proof.
\end{proof}
\begin{lemma}\label{lem:dec}
Any prefix-free subset $P$ of $\Ntf^+$ can be represented as the following disjunctive union
\begin{equation}\label{eq:dec}
    P=\Ntf_P+\sum_{n\in\Ntf\setminus\Ntf_P}n\cdot P_n\qquad
        \text{where}\quad\Ntf_P=\{n\in\Ntf\mid n\in P\}\text{ and }P_n=n^{-1}\cdot P.
\end{equation}
\end{lemma}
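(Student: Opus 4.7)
The plan is to establish the decomposition by classifying each element of $P$ according to whether it is a single-letter word or a longer word, and then reading off its first letter. Concretely, for $\bs{u} \in P$ I would consider the first symbol $n = \app{\bs u}{0}$ (which is defined because $P \subset \Ntf^+$ forces $|\bs u| \geq 1$) and split into two cases: if $|\bs u| = 1$, then $\bs u = n$, so $n \in \Ntf_P$; if $|\bs u| > 1$, write $\bs u = n \bs u'$ with $\bs u' = \bs u \slice{1}{} \in \Ntf^+$, and observe that $\bs u' \in P_n = n^{-1} \cdot P$ by definition. Prefix-freeness of $P$ forces $n = \bs u \slice{0}{1} \notin P$ in this second case, so $n \in \Ntf \setminus \Ntf_P$ and hence $\bs u \in n \cdot P_n$. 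This shows $P$ is contained in the right-hand side of \eqref{eq:dec}.

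For the reverse inclusion, $\Ntf_P \subset P$ is immediate from the definition of $\Ntf_P$, while for $n \in \Ntf \setminus \Ntf_P$ and $\bs v \in P_n$ the definition of $n^{-1} \cdot P$ gives $n \bs v \in P$; so $n \cdot P_n \subset P$ and the union equals $P$.

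It then remains to verify that the union is disjoint. Here the key point is Lemma~\ref{lem:prefix-free}: since $n \in \Ntf \setminus \Ntf_P$ means $n \notin P$, we have $P_n \subset \Ntf^+$, so every element of $n \cdot P_n$ has length at least $2$ and therefore cannot lie in $\Ntf_P$. Disjointness between $n \cdot P_n$ and $n' \cdot P_{n'}$ for distinct $n, n' \in \Ntf \setminus \Ntf_P$ follows because any word in $n \cdot P_n$ begins with $n$ whereas any word in $n' \cdot P_{n'}$ begins with $n'$.

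This argument is essentially bookkeeping rather than conceptual, so I do not anticipate a real obstacle; the only subtle point is the appeal to prefix-freeness (via Lemma~\ref{lem:prefix-free}) to rule out $\nil \in P_n$ and thereby separate the length-one summand $\Ntf_P$ from the longer summands $n \cdot P_n$.
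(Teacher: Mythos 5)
Your proof is correct and follows essentially the same route as the paper's: a double inclusion obtained by splitting each $\bs u\in P$ according to its length and first letter, with prefix-freeness ruling out the first letter lying in $\Ntf_P$ when $|\bs u|>1$. The only difference is that you also verify explicitly that the union is disjoint (via Lemma~\ref{lem:prefix-free} and the distinct first letters), a point the paper's proof leaves implicit.
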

\begin{proof}
Assume $\bs{u}\in P$ then either $\bs{u}=n$ for some $n\in\Ntf$ or $|\bs u|>1$.
In the first case, we have $\bs{u}\in\Ntf_P$ and, therefore, $\bs u$ belongs to the right side of \eqref{eq:dec}.
In another case, $\bs{u}=(\app{\bs u}{0})\bs{u}\slice{1}{}$ where $(\app{\bs u}{0})\notin\Ntf_P$ and $|\bs{u}\slice{1}{}|>0$ i.e. $\bs{u}\slice{1}{}\in(\app{\bs u}{0})^{-1}\cdot P$ and $\bs u$ belongs to the right side of \eqref{eq:dec}.
\\
Now assume $\bs u$ belongs to the right side of \eqref{eq:dec} then either $\bs{u}\in\Ntf_P$ or $\bs{u}\in\sum\limits_{n\in\Ntf\setminus\Ntf_P}n\cdot P_n$.
In the first case, we have $\bs{u}=n\in\Ntf_P$ i.e. $\bs{u}\in P$ by definition of $\Ntf_P$.
In another case, $\bs{u}\in n\cdot P_n$ where $n\in\Ntf\setminus\Ntf_P$.
Hence, $\bs{u}=n\bs{u}\slice{1}{}$ and $\bs{u}\slice{1}{}\in n^{-1}\cdot P$.
The last means that $\bs{u}\in P$.
\end{proof}
Now all is ready for proving Theorem~\ref{thm:final-detector}.
\begin{proof}[Proof of Theorem~\ref{thm:final-detector}]
Firstly, Lemma~\ref{lem:prefix-free} ensures that $\fic{\D}:\crr{\fic{\D}}\to\D\crr{\fic{\D}}$.
Hence, $\fic{\D}$ is an $\Ntf$-detector.
\\
Further, let us show that the mapping $\ana{\dr a}:\crr{\dr a}\to\crr{\fic{\D}}$ defined above for any $\Ntf$-detector $\dr a$ is an $\Ntf$-detector morphism.
\\
If $\app{(\dr ax)}{n}=\fault$ then $n\in\app{\ana{\dr a}}{x}$ i.e. $\big(\fic{\D}(\app{\ana{\dr a}}{x})\big)n=\fault$.
Conversely, if $\big(\fic{\D}(\app{\ana{\dr a}}{x})\big)n=\fault$ then $n\in\app{\ana{\dr  a}}{x}$ i.e. $\app{(\dr ax)}{n}=\fault$.
\\
If $\app{(\dr ax)}{n}\neq\fault$ then either $\app{\ana{\dr a}}{x}=\varnothing$ or this equation is wrong.
\\
In the first case, $\dr a^+(x,\bs{u})\neq\fault$ for any $\bs{u}\in\Ntf^+$ and, therefore, $\app{\ana{\dr a}}{\big(\app{(\dr ax)}{n}\big)}=\varnothing$ too.
Also, we have
\[
    \big(\fic{\D}(\,\app{\ana{\dr a}}{x})\big)n=(\fic{\D}\varnothing)n=n^{-1}\cdot\varnothing=\varnothing
        =\app{\ana{\dr a}}{\big(\app{(\dr ax)}{n}\big)}.
\]
In another case, we have both $\app{(\dr ax)}{n}\neq\fault$ and $\app{\ana{\dr a}}{x}\neq\varnothing$.
\\
If $\bs{u}\in\big(\fic{\D}(\,\app{\ana{\dr a}}{x})\big)n=n^{-1}\cdot(\,\app{\ana{\dr a}}{x})$ then $n\bs{u}\in\app{\ana{\dr a}}{x}$ i.e. $\dr a^+(x,n\bs{u})=\fault$ but $\dr a^+(x,n\bs{u}\slice{0}{m})\neq\fault$ whenever $m<|\bs u|$.
Hence, for any $m\leq|\bs u|$ and $\bs{v}=\bs{u}\slice{0}{m}$,  we have
\begin{equation}\label{eq:aux-final-detector}
    \dr a^+(x,n\bs{v})=\dr a^+(\dr a^+(x,n),\bs{v})=\dr a^+(\app{(\dr ax)}{n},\bs{v})
\end{equation}
due to Lemma~\ref{lem:mult} and by definition of $\dr a^+$.
Hence, one can conclude that $\bs{u}\in\app{\ana{\dr a}}{\big(\app{(\dr ax)}{n}\big)}$.
\\
Conversely, if $\bs{u}\in\app{\ana{\dr a}}{\big(\app{(\dr ax)}{n}\big)}$ then \eqref{eq:aux-final-detector} guarantees $\dr a^+(x,n\bs{u})=\fault$ but $\dr a^+(x,n\bs{u}\slice{0}{m})\neq\fault$ whenever $m<|\bs u|$.
It means that $\bs{u}\in n^{-1}\cdot(\,\app{\ana{\dr a}}{x})=\big(\fic{\D}(\,\app{\ana{\dr a}}{x})\big)n$.
\\
Thus, we have checked the $\Ntf$-detector morphism properties for $\ana{\dr a}$.
\\
For completing the proof, we need to show that $\ana{a}$ is the only $\Ntf$-detector morphism from $\dr a$ into $\fic{\D}$ i.e. $f=\ana{\dr a}$ for any $f:\dr a\to\fic{\D}$.
\\
Assume $fx=\varnothing$ then we have the next chain of equivalent statements
\begin{align*}
    &fx=\varnothing \\
    &n^{-1}\cdot(fx)=\varnothing\text{ for all }n\in\Ntf \\ 
    &\app{\big(\fic{\D}(fx)\big)}{n}=\varnothing\text{ for all }n\in\Ntf
        && \text{by definition of }\fic{\D}\\
    &\app{(\dr ax)}{n}\neq\fault\quad\text{and}\quad f\big(\app{(\dr ax)}{n}\big)=\varnothing\quad\text{for any }n\in\Ntf
        && \text{due to the $\Ntf$-morphism properties} \\
    &\dr a^+(x,\bs{u})\neq\fault\quad\text{for all }\bs{u}\in\Ntf^+ && \text{by induction on }\bs{u} \\
    &\app{\ana{\dr a}}{x}=\varnothing && \text{by definition of }\ana{\dr a}.
\end{align*}
Hence, $fx=\varnothing$ iff $\app{\ana{\dr a}}{x}=\varnothing$.
\\
Now using induction on $|\bs u|$, prove that $\bs{u}\in fx$ if and only if $\bs{u}\in\app{\ana{\dr a}}{x}$.
\\ 
If $|\bs u|=1$ then we have the next chain of equivalent statements.
\begin{align*}
    &n\in fx\quad\text{for some }n\in\Ntf \\
    &\big(\fic{\D}(fx)\big)n=\fault && \text{due to Lemma~\ref{lem:dec}} \\
    &\app{(\dr ax)}{n}=\fault && \text{due to the $\Ntf$-morphism properties} \\
    &n\in\app{\ana{\dr a}}{x} && \text{by definition of }\ana{a}.
\end{align*}
Hence, $\bs{u}\in fx$ iff $\bs{u}\in\app{\ana{\dr a}}{x}$ for any $\bs{u}\in\Ntf^+$ such that $|\bs u|=1$.
\\ 
If $|\bs u|=m>1$ and the required statement is true for $\bs{v}\in\Ntf^+$ such that $|\bs v|<m$ then we have the next chain of equivalent statements.
\begin{align*}
    &\bs{u}\in fx \\
    &\bs{u}\slice{1}{}\in(\app{\bs u}{0})^{-1}\cdot(fx) && \text{due to Lemma~\ref{lem:dec}} \\
    &\bs{u}\slice{1}{}\in\big(\fic{\D}(fx)\big)(\app{\bs u}{0}) && \text{by definition of }\fic{\D} \\
    &\bs{u}\slice{1}{}\in f\big(\app{(\dr ax)}{\app{(\bs u}{0})}\big)
        && \text{due to the $\Ntf$-morphism properties} \\
    &\bs{u}\slice{1}{}\in\app{\ana{\dr a}}{\big(\app{(\dr ax)}{(\app{\bs u}{0})}\big)}
        && \text{by induction hypothesis} \\
    &\dr a^+\big(\app{\dr (ax)}{(\app{\bs u}{0})},\bs{u}\slice{1}{}\big)=\fault\text{ but} \\
    &\qquad\dr a^+\big(\app{(\dr ax)}{(\app{\bs u}{0})},\bs{u}\slice{1}{k}\big)\neq\fault
        \text{ whenever }k<m && \text{by definition of }\ana{a} \\
    &\dr a^+\big(x,(\app{\bs u}{0})\bs{u}\slice{1}{}\big)=\fault\text{ but} \\
    &\qquad\dr a^+\big(x,(\app{\bs u}{0})\bs{u}\slice{1}{k}\big)\neq\fault
        \text{ whenever }k<m && \text{due to Lemma~\ref{lem:mult}} \\
    &\bs{u}\in\app{\ana{\dr a}}{x} && \text{by definition of }\ana{\dr a}.
\end{align*}
Thus, $f=\ana{\dr a}$\,.
\end{proof}
Below we need a characterisation of the subsets of $\crr{\fic{\D}}$ that are carriers of subcoalgebras.
The next proposition gives this characterisation.
\begin{proposition}\label{prop:subcoalgebra}
A subset $C\subset\crr{\fic{\D}}$ is the carrier of a subcoalgebras defined by the natural embedding $j_C:C\to\crr{\fic{\D}}$ if and only if the condition
\[
    \text{for any }P\in C\text{ and }n\in\Ntf,\quad\text{either}\quad n\in P\quad\text{or}\quad n^{-1}\cdot P\in C
\]
is fulfilled.
\end{proposition}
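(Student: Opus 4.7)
The plan is to unpack the definition of subcoalgebra directly. A subset $C\subset\crr{\fic{\D}}$ carries a subcoalgebra via the inclusion $j_C$ precisely when there exists a structure map $c:C\to\D C$ satisfying $\D(j_C)\circ c=\fic{\D}\circ j_C$. Since $j_C$ is a set-theoretic inclusion and $\D(j_C)$ acts by applying $\ext{j_C}$ coordinate-wise, this equation, evaluated at $P\in C$ and $n\in\Ntf$, reduces to the pointwise condition: $(c(P))(n)=\fault$ iff $n\in P$, and otherwise $(c(P))(n)\in C$ coincides with $n^{-1}\cdot P$. I will then show that the existence of such a $c$ is equivalent to the stated condition.

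For the forward direction, assuming the subcoalgebra $c:C\to\D C$ exists, I fix $P\in C$ and $n\in\Ntf$ with $n\notin P$. By the formula for $\fic{\D}$ in Theorem~\ref{thm:final-detector}, $(\fic{\D}(P))(n)$ equals $n^{-1}\cdot P$, which Lemma~\ref{lem:prefix-free} confirms is a legitimate element of $\crr{\fic{\D}}$. Commutativity of the morphism square then yields
\[
    (\ext{j_C})\big((c(P))(n)\big)=(\fic{\D}(j_C P))(n)=n^{-1}\cdot P,
\]
and since this value is not $\fault$, the value $(c(P))(n)$ must be some $Q\in C$ with $j_C(Q)=Q=n^{-1}\cdot P$. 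Hence $n^{-1}\cdot P\in C$, giving the required condition.

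For the converse, assuming the pointwise condition holds, I define $c:C\to\D C$ by $(c(P))(n)=\fault$ when $n\in P$ and $(c(P))(n)=n^{-1}\cdot P$ otherwise. The hypothesis guarantees the second branch always lands in $C$, so $c$ is well-typed; by construction $\D(j_C)\circ c$ and $\fic{\D}\circ j_C$ agree pointwise. Since $j_C$ is a monomorphism in $\cat{Set}$ (being an injection), this lifts it to a coalgebraic monomorphism, exhibiting $c$ as a subcoalgebra of $\fic{\D}$.

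No step presents a real obstacle: the argument is essentially an unfolding of the definitions of $\D$, $j_C$, and $\fic{\D}$. The only substantive input is Lemma~\ref{lem:prefix-free}, which ensures $n^{-1}\cdot P$ is prefix-free whenever $n\notin P$; without this, the candidate structure map $c$ in the converse would not even be well-typed.
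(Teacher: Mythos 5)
Your argument is exactly the routine unfolding of the definitions of $\D$, $j_C$, and $\fic{\D}$ that the paper's one-line proof alludes to, and it is correct: the forward direction reads off $n^{-1}\cdot P\in C$ from commutativity of the morphism square, and the converse restricts $\fic{\D}$ to $C$, using Lemma~\ref{lem:prefix-free} for well-typedness. Since the paper offers no further detail beyond ``checking the requirements of definitions,'' your write-up matches its intended approach and in fact supplies the omitted details.
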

{\it Proof} boils down to simply checking the requirements of definitions.\qed

\section{Coalgebraic Understanding Safety Constraints}\label{sec:constraints}
The general coalgebraic framework for recognising violations of the behaviour of a system with output is introduced and studied in this section.
\subsection{Functor Join and Its Properties}
This subsection defines bifunctor $\fnr{Join}$ (see Theorem~\ref{thm:join-bifuctor}), which is the key tool for our studying.
The importance of this bifunctor is related to the fact it preserves bisimulations (see Theorem~\ref{thm:bisimulation-preserving}).

Firstly assuming $\sigma$ and $\dr a$ is a system with output $\Ntf$ and an $\Ntf$-detector respectively, one can define the system with termination $\fnr{Join}(\sigma,\dr a):\crr\sigma\times\crr{\dr a}\to\fnr{T}(\crr\sigma\times\crr{\dr a})$ as follows
\begin{subequations}\label{eq:ver}
\begin{align}
    \fnr{Join}(\sigma,\dr a)&=\labs{(x,y)\in\crr\sigma\times\crr{\dr a}}{\left\lbrace\begin{array}{cc}
        \fault & \text{if }(\dr ay)(\app{\out{\sigma}}{x})=\fault \\
        \bigtuple{\app{\tr{\sigma}}{x}}{{(\dr ay)(\app{\out{\sigma}}{x})}} & \text{otherwise}
    \end{array}\right.}\label{eq:ver-1} \\
\intertext{%
Further for systems $\sigma$ and $\tau$ with output $\Ntf$ and $\Ntf$-detectors $\dr a$ and $\dr b$, an $\S$-morphism $f:\sigma\to\tau$, and a detector morphism $g:\dr a\to\dr b$, we define the mapping $\fnr{Join}(f,g):\crr\sigma\times\crr{\dr a}\to\crr\tau\times\crr{\dr b}$ by the formula}
    \fnr{Join}(f,g)&=f\times g.\label{eq:ver-2}
\end{align}
\end{subequations}
\begin{theorem}\label{thm:join-bifuctor}
Rules~\eqref{eq:ver} determine a bifunctor $\fnr{Join}:\Sys{\S}\times\Sys{\D}\to\Sys{\fnr{T}}$.
\end{theorem}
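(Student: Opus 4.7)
The plan is to verify the three bifunctor axioms in turn: well-definedness of $\fnr{Join}$ on objects, well-definedness on morphisms, and preservation of identities and composition.

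On objects, the case split in \eqref{eq:ver-1} sends $(x,y)$ either to $\fault\in\onm1$ or to $\bigtuple{\app{\tr\sigma}{x}}{(\dr ay)(\app{\out\sigma}{x})}\in\crr\sigma\times\crr{\dr a}$, where on the non-$\fault$ branch $(\dr ay)(\app{\out\sigma}{x})$ really lies in $\crr{\dr a}$ because the case hypothesis excludes $\fault$. Hence $\fnr{Join}(\sigma,\dr a)$ is a mapping into $\fnr{T}(\crr\sigma\times\crr{\dr a})=\onm1+\crr\sigma\times\crr{\dr a}$, i.e.\ a genuine $\fnr{T}$-system.

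The main step is to check that $f\times g$ is a $\fnr{T}$-morphism from $\fnr{Join}(\sigma,\dr a)$ into $\fnr{Join}(\tau,\dr b)$. I would fix $(x,y)\in\crr\sigma\times\crr{\dr a}$ and split on whether $(\dr ay)(\app{\out\sigma}{x})=\fault$. The $\S$-morphism laws for $f$ give $\app{\out\tau}{(fx)}=\app{\out\sigma}{x}$ and $\app{\tr\tau}{(fx)}=f(\app{\tr\sigma}{x})$, while the detector-morphism laws for $g$ say that $(\dr b(gy))(n)=\fault$ iff $(\dr ay)(n)=\fault$, with $(\dr b(gy))(n)=g((\dr ay)(n))$ otherwise. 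Instantiating the second pair of laws at $n=\app{\out\sigma}{x}=\app{\out\tau}{(fx)}$ shows that the $\fault$ alternatives on the two sides align, and in the non-$\fault$ case both $\fnr{Join}(\tau,\dr b)(fx,gy)$ and $(f\times g)\bigl(\fnr{Join}(\sigma,\dr a)(x,y)\bigr)$ reduce to $\bigtuple{f(\app{\tr\sigma}{x})}{g((\dr ay)(\app{\out\sigma}{x}))}$. This is exactly the $\fnr{T}$-morphism condition.

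Functoriality then reduces to the corresponding facts about the Cartesian product on $\cat{Set}$: $\id[\crr\sigma]\times\id[\crr{\dr a}]=\id[\crr\sigma\times\crr{\dr a}]$ takes care of identities, and $(f'\circ f)\times(g'\circ g)=(f'\times g')\circ(f\times g)$ takes care of composition, so both bifunctor equations are inherited pointwise. The only genuinely computational step is the morphism check in the previous paragraph, and even there no deeper obstacle arises—one merely aligns the two morphism definitions against a single case split on the $\fault$ alternative.
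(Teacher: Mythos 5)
Your proof is correct and follows essentially the same route as the paper: the substantive step is checking the $\fnr{T}$-morphism condition for $f\times g$ via a case split on the $\fault$ alternative using the $\S$- and $\D$-morphism laws, after which functoriality is inherited from the Cartesian product on $\cat{Set}$ (a point the paper dispatches even more tersely than you do). Your explicit check that $\fnr{Join}(\sigma,\dr a)$ is well defined on objects is a harmless addition the paper leaves implicit.
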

\begin{proof}
Firstly, let us check that for any $f:\sigma\to\tau$ and $g:\dr a\to\dr b$ where $\sigma,\tau\in\Sys{\S}$ and $a,b\in\Sys{\D}$, $\fnr{Join}(f,g)$ is a $\fnr{T}$-morphism from $\fnr{Join}(\sigma,\dr a)$ into $\fnr{Join}(\tau,\dr b)$. \\
Indeed, let $\big(\fnr{Join}(\sigma,\dr a)\big)(x,y)=\fault$ where $x\in\crr\sigma$, $y\in\crr{\dr a}$ then we have the next chain of equivalent statements
\begin{align*}
    &\big(\fnr{Join}(\sigma,\dr a)\big)\tuple{x}{y}=\fault \\
    &(\dr ay)(\out{\sigma}x)=\fault && \text{by definition of }\fnr{Join}(\sigma,\dr a) \\
    &(\dr ay)(\out{\tau}(fx))=\fault && \text{due to }f\text{ is an }\S\text{-morphism} \\
    &\big(\dr b(gy)\big)\big(\out{\tau}(fx)\big)=\fault
        && \text{due to }g\text{ is a }\D\text{-morphism} \\
    &\big(\fnr{Join}(\tau,\dr b)\big)\tuple{fx}{gy}=\fault
        && \text{by definition of }\fnr{Join}(\tau,\dr b) \\
    &\big(\fnr{Join}(\tau,\dr b)\big)\big(\fnr{Join}(f,g)\tuple{x}{y}\big)=\fault
        && \text{by definition of }\fnr{Join}(f,g)
\end{align*}
Now assume that $\big(\fnr{Join}(\sigma,\dr a)\big)\tuple{x}{y}\neq\fault$\footnote{%
It means $(\dr ay)(\out\sigma x)\neq\fault$} where $x\in\crr\sigma$, $y\in\crr{\dr a}$ then
\begin{align*}
    \big(\fnr{Join}(f,g)\big)\Big(\big(\fnr{Join}(\sigma,\dr a)\big)\tuple{x}{y}\Big)
        &=\big(\fnr{Join}(f,g)\big)\bigtuple{\tr{\sigma}x}{g\big((\dr ay)(\out{\sigma}x)\big)}
        && \text{by definition of }\fnr{Join}(\sigma,\dr a) \\
    &=\bigtuple{f(\tr{\sigma}x)}{g\big((\dr ay)(\out{\sigma}x)\big)}
        && \text{by definition of }\fnr{Join}(f,g) \\
    &=\bigtuple{\tr{\tau}(fx)}{g\big((\dr ay)\big(\out{\tau}(fx)\big)}
        && \text{considering that $f$ is an $\S$-morphism} \\
    &=\bigtuple{\tr{\tau}(fx)}{\big(\dr b(gy)\big)\big(\out{\tau}(fx)\big)}
        && \text{considering that $g$ is a $\D$-morphism} \\
    &=\big(\fnr{Join}(\tau,\dr b)\big)\bigtuple{fx}{gy}
        && \text{by definition of }\fnr{Join}(\tau,\dr b) \\
    &=\big(\fnr{Join}(\tau,\dr b)\big)\big(\fnr{Join}(f,g)\tuple{x}{y}\big)
        && \text{by definition of }\fnr{Join}(f,g).
\end{align*}
Thus, $\fnr{Join}(f,g)$ is a $\fnr{T}$-morphism (see Subsec.~\ref{subsec:T-system}).
\\
Taking into account rule \eqref{eq:ver-2} one can conclude that $\fnr{Join}$ is a bifunctor.
\end{proof}
\begin{theorem}\label{thm:bisimulation-preserving}
Let $\sigma\xleftarrow{p_\sigma}\rho\xrightarrow{p_\tau}\tau$ be a bisimulation of systems $\sigma$ and $\tau$ with output $\Ntf$ and $a\xleftarrow{q_{\dr a}}\dr r\xrightarrow{q_{\dr b}}b$ be a bisimulation of $\Ntf$-detectors $\dr a$ and $\dr b$ then $\fnr{Join}(\rho,\dr r)$ is a bisimulation of $\fnr{Join}(\sigma,\dr a)$ and $\fnr{Join}(\tau,\dr b)$.
\end{theorem}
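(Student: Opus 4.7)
The plan is to take the obvious candidate span, namely
\[
    \fnr{Join}(\sigma,\dr a)
        \xleftarrow{\fnr{Join}(p_\sigma,q_{\dr a})}
    \fnr{Join}(\rho,\dr r)
        \xrightarrow{\fnr{Join}(p_\tau,q_{\dr b})}
    \fnr{Join}(\tau,\dr b),
\]
and to verify the two clauses of Definition~\ref{def:bisimulation} separately: (i) that the span lives in $\Sys{\fnr T}$, and (ii) that the underlying span in $\cat{Set}$ is a mono-span.

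For (i) I would simply invoke Theorem~\ref{thm:join-bifuctor}: since $\fnr{Join}$ is a bifunctor and $p_\sigma, p_\tau$ are $\S$-morphisms while $q_{\dr a}, q_{\dr b}$ are $\D$-morphisms, the arrows $\fnr{Join}(p_\sigma,q_{\dr a})$ and $\fnr{Join}(p_\tau,q_{\dr b})$ are automatically $\fnr T$-morphisms in $\Sys{\fnr T}$. No coalgebraic computation is needed at this step, which is the payoff of having already established functoriality.

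For (ii) the key observation is that by rule~\eqref{eq:ver-2} the underlying mappings of the two legs are the cartesian products $p_\sigma\times q_{\dr a}$ and $p_\tau\times q_{\dr b}$ on $\crr\rho\times\crr{\dr r}$. Given two parallel $h',h''\colon Z\to\crr\rho\times\crr{\dr r}$ that are coequalised by both products, I would compose with the two product projections to split the hypothesis into four equations: $p_\sigma\pr_1 h' = p_\sigma\pr_1 h''$, $p_\tau\pr_1 h' = p_\tau\pr_1 h''$, $q_{\dr a}\pr_2 h' = q_{\dr a}\pr_2 h''$, and $q_{\dr b}\pr_2 h' = q_{\dr b}\pr_2 h''$. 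Applying the mono-span property of $\sigma\xleftarrow{p_\sigma}\rho\xrightarrow{p_\tau}\tau$ to the first pair yields $\pr_1 h' = \pr_1 h''$, and applying the mono-span property of $\dr a\xleftarrow{q_{\dr a}}\dr r\xrightarrow{q_{\dr b}}\dr b$ to the second pair yields $\pr_2 h' = \pr_2 h''$. The universal property of the product then forces $h' = h''$, establishing the mono-span property.

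There is no genuine obstacle here: the work has been front-loaded into Theorem~\ref{thm:join-bifuctor} (which handles the coalgebra-morphism side) and into the componentwise definition~\eqref{eq:ver-2} (which reduces the mono-span check to a product argument). The only thing to watch is that the mono-span condition in $\cat{Set}$ is applied to the two given bisimulations independently, so the argument is nothing more than a product decomposition of the test arrows.
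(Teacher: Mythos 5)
Your proposal is correct and follows essentially the same route as the paper: take the span $\fnr{Join}(p_\sigma,q_{\dr a})$, $\fnr{Join}(p_\tau,q_{\dr b})$, rely on Theorem~\ref{thm:join-bifuctor} for the coalgebra-morphism legs, and reduce the mono-span check to the two given bisimulations via the product decomposition of test arrows (the paper writes the test maps as pairings $\langle g_{\crr\rho},g_{\crr{\dr r}}\rangle$ rather than composing with projections, which is the same argument). No gaps.
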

\begin{proof}
Let us consider the span $\fnr{Join}(\sigma,\dr a)\xleftarrow{\fnr{Join}(p_\sigma,q_{\dr a})}\fnr{Join}(\rho,\dr r)\xrightarrow{\fnr{Join}(p_\tau,q_{\dr b})}\fnr{Join}(\tau,\dr b)$. It is sufficient to show that the corresponding span $\crr\sigma\times\crr{\dr a}\xleftarrow{p_\sigma\times q_{\dr a}}\crr\rho\times\crr{\dr r}\xrightarrow{p_\tau\times q_{\dr b}}\crr\tau\times\crr{\dr b}$ in \cat{Set} is a mono-span. \\
Assume some mappings $g,h:S\to\crr\rho\times\crr{\dr r}$ with common domain $S$ satisfy the equations $(p_\sigma\times q_{\dr a})g = (p_\sigma\times q_{\dr a})h$ and $(p_\tau\times q_{\dr b})g = (p_\tau\times q_{\dr b})h$.
These mapping can be uniquely represented as follows $g=\langle g_{\crr\rho},g_{\crr{\dr r}}\rangle$ and $h=\langle h_{\crr\rho},h_{\crr{\dr r}}\rangle$ respectively where $g_{\crr\rho},h_{\crr\rho}:S\to\crr\rho$ and $g_{\crr{\dr r}},h_{\crr{\dr r}}:S\to\crr{\dr r}$. \\
Taking into account that %
\[
    (p_\sigma\times q_{\dr a})g =
        (p_\sigma\times q_{\dr a})\langle g_{\crr\rho},g_{\crr{\dr r}}\rangle =
        \langle p_\sigma g_{\crr\rho},q_{\dr a} g_{\crr{\dr r}}\rangle
        \qquad\text{and}\qquad
    (p_\sigma\times q_{\dr a})h =
        (p_\sigma\times q_{\dr a})\langle h_{\crr\rho},h_{\crr{\dr r}}\rangle =
        \langle p_\sigma h_{\crr\rho},q_{\dr a} h_{\crr{\dr r}}\rangle
\]
we have $\langle p_\sigma g_{\crr\rho},q_{\dr a} g_{\crr{\dr r}}\rangle = \langle p_\sigma h_{\crr\rho},q_{\dr a} h_{\crr{\dr r}}\rangle$ i.e. $p_\sigma g_{\crr\rho} = p_\sigma h_{\crr\rho}$ and $q_{\dr a} g_{\crr{\dr r}} = q_{\dr a} h_{\crr{\dr r}}$ due to properties of product.
Similarly, we have $p_\tau g_{\crr\rho} = p_\tau h_{\crr\rho}$ and $q_{\dr b} g_{\crr{\dr r}} = q_{\dr b} h_{\crr r}$ from condition $(p_\tau\times q_{\dr b})g = (p_\tau\times q_{\dr b})h$. \\
Then one can derive $g_{\crr\rho} = h_{\crr\rho}$ using the conditions $p_\sigma g_{\crr\rho} = p_\sigma h_{\crr\rho}$ and $p_\tau g_{\crr\rho} = p_\tau h_{\crr\rho}$ and the bisimulation $\sigma\xleftarrow{p_\sigma}\rho\xrightarrow{p_\tau}\tau$. \\
Similarly, we have $g_{\crr{\dr r}} =  h_{\crr{\dr r}}$ due to the conditions $q_{\dr a} g_{\crr{\dr r}} = q_{\dr a} h_{\crr{\dr r}}$ and $q_{\dr b} g_{\crr{\dr r}} = q_{\dr b} h_{\crr{\dr r}}$ and the bisimulation $\dr a\xleftarrow{q_{\dr a}}\dr r\xrightarrow{q_{\dr b}}\dr b$. \\
Thus, $g = h$ and, therefore, the considering span $\crr\sigma\times\crr{\dr a}\xleftarrow{p_\sigma\times q_{\dr a}}\crr\rho\times\crr{\dr r}\xrightarrow{p_\tau\times q_{\dr b}}\crr\tau\times\crr{\dr b}$ is realy a mono-span.
\end{proof}
\subsection{Safety Constraints and Detectors}
This subsection is central to the paper.
Here we establish an association between $\Ntf$-detectors and some subsets of $\Ntf^{\N}$.
Further, we prove this class of subsets is exactly the class of safety constraints.

First of all for $\bs{s}\in\Ntf^{\N}$, let us define the following system $[\bs s]$ with output namely
\[
    \crr{[\bs s]}=\big\{\bs{s}\slice{k}{}\mid k\in\N\big\}\quad\text{and}\quad
    [\bs s]=\labs{\bs{t}\in\crr{[\bs s]}}{\bigtuple{\app{\bs t}{0}}{\bs{t}\slice{1}{}}}.
\]
Further for any $\Ntf$-detector $\dr a$ and $x\in\crr{\dr a}$, let us define the following set
\[
    \constr{\dr a}{x}=\{\bs{s}\in\Ntf^{\N}\mid\app{\ana{\fnr{Join}([\bs s],\dr a)}}{\tuple{\bs s}{x}}=\infty\}.
\]
The next simple fact is useful below.
\begin{lemma}\label{lem:basic}
For $\bs{s}\in\Ntf^{\N}$, an $\Ntf$-detector $\dr a$, and $x\in\crr{\dr a}$,
\[
    \big(\fnr{Join}([\bs s],\dr a)\big)^{(m)}\tuple{\bs s}{x}=\left\lbrace\begin{array}{cc}
        \fault & \text{if }\dr a^+(x,\bs{s}\slice{0}{m})=\fault \\
        \tuple{\bs{s}\slice{m}{}}{\dr a^+(x,\bs{s}\slice{0}{m})} & \text{otherwise} 
    \end{array}\right.\quad\text{for }m>0.
\]
\end{lemma}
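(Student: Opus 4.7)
The plan is to prove the identity by induction on $m$. The content of the lemma is essentially that iterating $\fnr{Join}([\bs s], \dr a)$ starting from $\tuple{\bs s}{x}$ simulates feeding the detector $\dr a$ the prefix $\bs s\slice{0}{m}$ of $\bs s$ while simultaneously advancing the stream pointer by $m$ positions; the two pieces of accumulated data (the remaining stream tail and the detector state reached) are precisely what $\dr a^+$ records.

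For the base case $m = 1$, I unfold $\big(\fnr{Join}([\bs s], \dr a)\big)^{(1)} = \fnr{Join}([\bs s], \dr a)$ and apply its definition \eqref{eq:ver-1} at $\tuple{\bs s}{x}$. From the definition of $[\bs s]$ one reads off $\out{[\bs s]}\bs s = \app{\bs s}{0}$ and $\tr{[\bs s]}\bs s = \bs s\slice{1}{}$; further, under the identification of $n \in \Ntf$ with the one-letter word of length one, $\bs s\slice{0}{1}$ is $\app{\bs s}{0}$, so $\dr a^+(x, \bs s\slice{0}{1}) = \app{(\dr a x)}{\app{\bs s}{0}}$ by the first clause defining $\dr a^+$. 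The two cases of the claim then match the two cases of the $\fnr{Join}$ definition verbatim.

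For the inductive step, assuming the claim for $m$, I expand $\big(\fnr{Join}([\bs s], \dr a)\big)^{(m+1)}\tuple{\bs s}{x}$ via the recursive definition of $g^{(k+1)}$ from Subsec.~\ref{subsec:T-system} and split on whether $\dr a^+(x, \bs s\slice{0}{m}) = \fault$. In the $\fault$ subcase the induction hypothesis already gives $\fault$, and the defining clause of $\dr a^+$ on words of the form $\bs u n$ immediately yields $\dr a^+(x, \bs s\slice{0}{m+1}) = \fault$, matching the first case of the claim. In the other subcase the induction hypothesis yields the pair $\tuple{\bs s\slice{m}{}}{\dr a^+(x, \bs s\slice{0}{m})}$, and applying $\fnr{Join}([\bs s], \dr a)$ once more extracts $\out{[\bs s]}(\bs s\slice{m}{}) = \app{\bs s}{m}$ and transitions to $\tr{[\bs s]}(\bs s\slice{m}{}) = \bs s\slice{m+1}{}$. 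The key combinatorial identity closing the step is $\bs s\slice{0}{m+1} = (\bs s\slice{0}{m})\,\app{\bs s}{m}$; combined with the clause $\dr a^+(x, \bs u n) = \app{(\dr a(\dr a^+(x, \bs u)))}{n}$ valid when $\dr a^+(x, \bs u) \neq \fault$, this identifies the updated detector component as $\dr a^+(x, \bs s\slice{0}{m+1})$ and routes into the two cases of the claim for $m+1$.

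The main obstacle is not mathematical but bookkeeping: keeping straight the identification of letters with one-letter words, the slicing conventions $\bs s\slice{0}{m}$ versus $\bs s\slice{m}{}$, and the nested case analyses built into both the iterated-map construction and the recursion defining $\dr a^+$. No machinery beyond the definitions of $\fnr{Join}$, $[\bs s]$, and $\dr a^+$ is required, and in particular Lemma~\ref{lem:mult} is not needed here, only the one-step extension clause of $\dr a^+$.
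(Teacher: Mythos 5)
Your proof is correct and follows essentially the same route as the paper: induction on $m$, with the base case read off from \eqref{eq:ver-1} and the inductive step splitting on whether $\dr a^+(x,\bs{s}\slice{0}{m})=\fault$, then applying $\fnr{Join}([\bs s],\dr a)$ once more together with the one-step clause of $\dr a^+$ and the identity $\bs{s}\slice{0}{m+1}=(\bs{s}\slice{0}{m})\,\app{\bs s}{m}$. Your observation that Lemma~\ref{lem:mult} is not needed is also consistent with the paper's argument.
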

\begin{proof}
For proving we apply induction on $m$.
If $m=1$ then the equation holds due to \eqref{eq:ver-1}.
\\
Now assume the equation holds for some $m>1$.
\\
Let $\big(\fnr{Join}([\bs s],\dr a)\big)^{(m)}\tuple{\bs s}{x}=\fault$ then $\dr a^+(x,\bs{s}\slice{0}{m})=\fault$ by induction hypothesis.
The definition of $\dr a^+$ ensures $\dr a^+(x,\bs{s}\slice{0}{m+1})=\fault$.
But $\big(\fnr{Join}([\bs s],\dr a)\big)^{(m+1)}\tuple{\bs s}{x}=\fault$ by definition.
Hence, the equation holds in this case.
\\
Finally, assume $\big(\fnr{Join}([\bs s],\dr a)\big)^{(m)}\tuple{\bs s}{x}\neq\fault$ then $\big(\fnr{Join}([\bs s],\dr a)\big)^{(m)}\tuple{\bs s}{x}=\tuple{\bs{s}\slice{m}{}}{\dr a^+(x,\bs{s}\slice{0}{m})}$ by induction hypo\-thesis i.e. $\dr a^+(x,\bs{s}\slice{0}{m})\neq\fault$.
Thus,
\[
    \big(\fnr{Join}([\bs s],\dr a)\big)^{(m+1)}\tuple{\bs s}{x}
        =\big(\fnr{Join}([\bs s],\dr a)\big)\Big(\big(\fnr{Join}([\bs s],\dr a)\big)^{(m)}\tuple{\bs s}{x}\Big)
        =\big(\fnr{Join}([\bs s],\dr a)\big)\tuple{\bs{s}\slice{m}{}}{\dr a^+(x,\bs{s}\slice{0}{m})}.
\]
Now applying \eqref{eq:ver-1}, we obtain the required expression for $\big(\fnr{Join}([\bs s],\dr a)\big)^{(m+1)}\tuple{\bs s}{x}$.
\end{proof}
\begin{lemma}\label{lem:detector2safe}
For any $\Ntf$-detector $\dr a$ and $x\in\crr{\dr a}$, the set $\constr{\dr a}{x}$ is a safety constraint.
\end{lemma}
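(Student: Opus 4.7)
The plan is to reduce membership in $\constr{\dr a}{x}$ to a condition on finite prefixes of $\bs s$, at which point the safety property follows immediately. The key observation, which I expect to do first, is to unfold the definition of $\constr{\dr a}{x}$ using the explicit description of the anamorphism for $\fnr{T}$-systems from Subsec.~\ref{subsec:T-system}. Recall that $\app{\ana{g}}{y}=\infty$ iff $g^{(k)}y\neq\fault$ for every $k\in\N[+]$. Applying this to the $\fnr{T}$-system $\fnr{Join}([\bs s],\dr a)$ at the point $\tuple{\bs s}{x}$ and combining with Lemma~\ref{lem:basic}, one obtains the clean characterisation
\[
    \bs{s}\in\constr{\dr a}{x}\iff\dr a^+(x,\bs{s}\slice{0}{m})\neq\fault\quad\text{for every }m\in\N[+].
\]

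With this reformulation in hand, verifying the safety condition is direct. Suppose $\bs{s}\in\Ntf^{\N}$ has the property that, for each $m\in\N$, there exists $\bs{s}'\in\constr{\dr a}{x}$ with $\bs{s}\slice{0}{m}=\bs{s}'\slice{0}{m}$. I would fix an arbitrary $m\in\N[+]$ and pick such an $\bs{s}'$. By the characterisation above, $\dr a^+(x,\bs{s}'\slice{0}{m})\neq\fault$; but this value depends only on the finite prefix $\bs{s}'\slice{0}{m}=\bs{s}\slice{0}{m}$, so $\dr a^+(x,\bs{s}\slice{0}{m})\neq\fault$ as well. Since $m$ was arbitrary, the characterisation gives $\bs{s}\in\constr{\dr a}{x}$, which is exactly what the definition of a safety constraint demands.

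I do not expect a serious obstacle: the nontrivial content has already been isolated in Lemma~\ref{lem:basic}, which lets us trade the coalgebraic statement ``the anamorphism value is $\infty$'' for the concrete statement ``no finite prefix drives $\dr a^+$ to $\fault$''. The only point requiring a sentence of care is noting that $\dr a^+(x,\bs{u})$ depends solely on $\bs{u}$ (not on how $\bs u$ is extended), which is immediate from the definition of $\dr a^+$ by induction on length; this is what makes the ``matches on each prefix'' hypothesis usable. No further machinery — in particular, no use of the bisimulation machinery from Theorem~\ref{thm:bisimulation-preserving} — seems needed for this specific lemma.
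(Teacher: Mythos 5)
Your proposal is correct and follows essentially the same route as the paper: both reduce membership in $\constr{\dr a}{x}$ via the explicit $\fnr{T}$-anamorphism and Lemma~\ref{lem:basic} to the condition that $\dr a^+(x,\bs{s}\slice{0}{m})\neq\fault$ for every $m$, and then exploit that this value depends only on the finite prefix shared with some $\bs{s}'\in\constr{\dr a}{x}$. The only difference is presentational — the paper argues by contradiction (a finite anamorphism value $K$ would force $\fault$ on the shared prefix of length $K+1$), while you argue directly — which does not change the substance.
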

\begin{proof}
Indeed, let us assume the existence of $\bs{s}\notin\constr{\dr a}{x}$ such that the equation $\bs{s}'\slice{0}{m}=\bs{s}\slice[*]{0}{m}$ holds for any $m\in\N$ and for some $\bs{s}'\in\constr{\dr a}{x}$ depending in generally on $m$.
\\
The fact $\bs{s}\notin\constr{\dr a}{x}$ ensures $\ana{\fnr{Join}([\bs s],\dr  a)}\tuple{\bs s}{x}= K$ for some $K\in\N$.
It means (see Lemma~\ref{lem:basic}) $\dr a^+\big(x,\bs{s}\slice{0}{K+1}\big)=\fault$.
Let $\bs{s}'\in\constr{\dr a}{x}$ such that $\bs{s}'\slice[*]{0}{K+1}=\bs{s}\slice[*]{0}{K+1}$ then $\dr a^+\big(x,\bs{s}'\slice{0}{K+1}\big)=\fault$ and, therefore, $\ana{\fnr{Join}([\bs s'],\dr a)}\tuple{\bs s'}{x}=K$.
But $\ana{\fnr{Join}([\bs s'],\dr a)}\tuple{\bs s'}{x}=\infty$ by the assumption $\bs{s}'\in\constr{\dr a}{x}$.
This contradiction completes the proof.
\end{proof}
The following lemma is less simple.
\begin{lemma}\label{lem:safe2detector}
For any safety constraint $S\subset\Ntf^{\N}$, there exist an $\Ntf$-detector $\dr a_S$ and $x\in\crr{\dr a_S}$ such that $\constr{\dr a_S}{x}=S$.
\end{lemma}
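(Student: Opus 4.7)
My plan is to realise $\dr a_S$ as the ``prefix tree'' of $S$. Assuming $S\neq\varnothing$, I set
\[
    \crr{\dr a_S}=\{\bs u\in\Ntf^\ast\mid\bs s\slice{0}{|\bs u|}=\bs u\text{ for some }\bs s\in S\},
\]
so that the states are exactly the finite prefixes of streams in $S$; in particular $\nil\in\crr{\dr a_S}$. I define the detector transitions by
\[
    \dr a_S=\labs{\bs u\in\crr{\dr a_S}}{\labs{n\in\Ntf}{\left\lbrace\begin{array}{cl} \fault & \text{if }\bs u n\notin\crr{\dr a_S} \\ \bs u n & \text{otherwise}\end{array}\right.}}
\]
and take $x=\nil$. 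The degenerate case $S=\varnothing$ is handled separately by a one-state detector that faults on every input, for which $\constr{\dr a_S}{x}=\varnothing=S$.

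The crucial intermediate identity is
\[
    \dr a_S^+(\nil,\bs v)=\left\lbrace\begin{array}{cl}\bs v & \text{if }\bs v\in\crr{\dr a_S} \\ \fault & \text{otherwise}\end{array}\right.
    \qquad\text{for every }\bs v\in\Ntf^+,
\]
which I would prove directly by induction on $|\bs v|$ from the recursive definition of $\dr a_S^+$: the base case $|\bs v|=1$ is exactly the defining clause of $\dr a_S$, and in the inductive step $\bs v=\bs{v}'n$ one uses prefix-closedness of $\crr{\dr a_S}$ (if $\bs v\in\crr{\dr a_S}$ then $\bs{v}'\in\crr{\dr a_S}$ as well), so the two recursive cases of $\dr a_S^+$ match the two cases of the identity.

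Applying Lemma~\ref{lem:basic}, the condition $\ana{\fnr{Join}([\bs s],\dr a_S)}\tuple{\bs s}{\nil}=\infty$ translates into $\dr a_S^+(\nil,\bs s\slice{0}{m})\neq\fault$ for every $m>0$, which by the intermediate identity is $\bs s\slice{0}{m}\in\crr{\dr a_S}$ for all such $m$ -- that is, for every $m$ there exists $\bs{s}'\in S$ with $\bs{s}'\slice[*]{0}{m}=\bs s\slice[*]{0}{m}$. The defining property of a safety constraint then collapses this to $\bs s\in S$, giving $\constr{\dr a_S}{\nil}=S$ as required. The only points requiring care are the bookkeeping of prefix indices and the separate treatment of $S=\varnothing$; once the prefix-tree carrier is chosen, the rest is mechanical, and the safety axiom does the one substantive step.
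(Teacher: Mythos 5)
Your construction is exactly the paper's: the prefix-tree detector whose carrier is the set of finite prefixes of streams in $S$, with transitions by extension, started at $\nil$, combined with Lemma~\ref{lem:basic} and the safety property of $S$; your explicit induction for $\dr a_S^+(\nil,\bs v)$ and the separate $S=\varnothing$ case (which the paper covers by adjoining $\{\nil\}$ to the carrier) are only minor presentational differences. The proof is correct and essentially identical to the paper's.
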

\begin{proof}
Let us define the $\Ntf$-detector $\dr a_S$ as follows\footnote{%
Note that $\nil\in\crr{\dr a_S}$.}
\[\begin{split}
    \crr{\dr a_S}&=\{\nil\}\bigcup\big\{\bs{u}\in\Ntf^+\mid\bs{u}=\bs{s}\slice{0}{|\bs u|}%
        \text{ for some }\bs{s}\in S\big\} \\
    \dr a_S&=\labs{\bs{u}\in\crr{\dr a_S}}{\labs{n\in\Ntf}{\left\lbrace\begin{array}{cc}
        \fault & \text{if }\bs{u}n\notin\crr{a_S} \\
        \bs{u}n & \text{otherwise}
    \end{array}\right.}}
\end{split}
\]
and prove that $\constr{\dr a_S}{\nil}=S$.
\\
Let us assume $\bs{s}\in S$ then $\bs{s}\slice{0}{m}\in\crr{\dr a_S}$ for any $m\in\N$.
Hence, $\dr a_S^+(\bs{s}\slice{0}{m},\app{\bs{s}}{m})=\bs{s}\slice{0}{m+1}\in\crr{\dr a_S}$ for each $m$ i.e. $\ana{\fnr{Join}([\bs s],\dr a_S)}\tuple{\bs s}{\nil}=\infty$.
Thus, $\bs{s}\in\constr{\dr a_S}{\nil}$.
\\
Conversely assume $\bs{s}\in\constr{\dr a_S}{\nil}$ then $\ana{\fnr{Join}([\bs s],\dr a_S)}\tuple{\bs s}{\nil}=\infty$ and, therefore, $\big(\fnr{Join}([\bs s],\dr a_S)\big)^{(m)}\tuple{\bs s}{\nil}\neq\fault$ for each $m>0$. 
Lemma~\ref{lem:basic} ensures $\bs{s}\slice{0}{m}=\dr a_S^+(\nil,\bs{s}\slice{0}{m})\in\crr{\dr a_S}$ for all $m>0$ that guarantees existence $\bs{s}^{(m)}\in S$ such that $\bs{s}^{(m)}\slice{0}{m}=\bs{s}\slice{0}{m}$.
Now using the safeness of $S$, one can conclude $\bs{s}\in S$.
\end{proof}
\begin{theorem}[about universal detector]\label{thm:safety-coalgebraically}
A subset $S\subset\Ntf^{\N}$ is a safety constraint if and only if there exist $P\in\crr{\fic{\D}}$ such that  $S=\constr{\fic{\D}}{\strut P}$.
\end{theorem}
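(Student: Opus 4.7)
The ``if'' direction requires nothing new: $\fic{\D}$ is itself an $\Ntf$-detector, so Lemma~\ref{lem:detector2safe} applied to $\fic{\D}$ and any $P\in\crr{\fic{\D}}$ already shows that $\constr{\fic{\D}}{P}$ is a safety constraint. All the work lies in the reverse direction.

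For the ``only if'' direction, the plan is to reduce to Lemma~\ref{lem:safe2detector}. Given a safety constraint $S\subset\Ntf^{\N}$, that lemma supplies an $\Ntf$-detector $\dr a_S$ together with $x\in\crr{\dr a_S}$ satisfying $\constr{\dr a_S}{x}=S$. I then take $P=\app{\ana{\dr a_S}}{x}\in\crr{\fic{\D}}$ and claim $\constr{\fic{\D}}{P}=\constr{\dr a_S}{x}$, which immediately finishes the theorem.

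To justify this equality, I fix an arbitrary $\bs s\in\Ntf^{\N}$ and compare $\app{\ana{\fnr{Join}([\bs s],\dr a_S)}}{\tuple{\bs s}{x}}$ with $\app{\ana{\fnr{Join}([\bs s],\fic{\D})}}{\tuple{\bs s}{P}}$. Since $\ana{\dr a_S}\colon\dr a_S\to\fic{\D}$ is a $\D$-morphism and the identity on $[\bs s]$ is trivially an $\S$-morphism, the bifunctoriality of $\fnr{Join}$ (Theorem~\ref{thm:join-bifuctor}) combined with rule~\eqref{eq:ver-2} exhibits $\id\times\ana{\dr a_S}$ as a $\fnr{T}$-morphism from $\fnr{Join}([\bs s],\dr a_S)$ into $\fnr{Join}([\bs s],\fic{\D})$. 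Composing with the anamorphism into $\fic{\fnr{T}}$ and appealing to the uniqueness of anamorphisms yields
\[
    \app{\ana{\fnr{Join}([\bs s],\dr a_S)}}{\tuple{\bs s}{x}}
    =\app{\ana{\fnr{Join}([\bs s],\fic{\D})}}{\tuple{\bs s}{\app{\ana{\dr a_S}}{x}}},
\]
so the two values equal $\infty$ simultaneously. By the definition of $\constr{\cdot}{\cdot}$, this is exactly $\bs s\in\constr{\dr a_S}{x}\Leftrightarrow\bs s\in\constr{\fic{\D}}{P}$, whence $\constr{\fic{\D}}{P}=S$.

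I do not anticipate any serious obstacle. The only point requiring care is to recognise that $\fnr{Join}$ applied to the pair consisting of the identity and the anamorphism coincides with the Cartesian product map $\id\times\ana{\dr a_S}$ by rule~\eqref{eq:ver-2}, after which the argument reduces entirely to the universal property of the final $\fnr{T}$-coalgebra described in Subsec.~\ref{subsec:T-system}.
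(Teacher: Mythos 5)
Your proposal is correct and follows essentially the same route as the paper: both directions rest on Lemmas~\ref{lem:detector2safe} and \ref{lem:safe2detector}, and the key step of setting $P=\app{\ana{\dr a_S}}{x}$ and transporting along the $\fnr{T}$-morphism $\fnr{Join}\big(\id[\crr{[\bs s]}],\ana{\dr a_S}\big)$ via uniqueness of anamorphisms into $\fic{\fnr{T}}$ is exactly the paper's argument. No gaps.
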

\begin{proof}
Lemmas~\ref{lem:detector2safe} and \ref{lem:safe2detector} ensure that a subset $S\subset\Ntf^{\N}$ is a safety constraint if and only if there exist an $\Ntf$-detector $\dr a$ and $x\in\crr{\dr a}$ such that $S=\constr{\dr a}{x}$.
Let us take $P=\app{\ana{\dr a}}{x}$ and prove $\ana{\fnr{Join}([\bs s],\dr a)}\tuple{\bs s}{x}=\ana{\fnr{Join}([\bs s],\fic{\D})}\tuple{\bs s}{P}$.
\\
Indeed, for the $\fnr{T}$-morphism $\fnr{Join}\big(\id[\crr{[\bs s]}],\ana{\dr a}\,\big):\fnr{Join}([\bs s],\dr a)\to\fnr{Join}([\bs s],\fic{\D})$, we have
\begin{equation}\label{eq:udet}
    \Big(\fnr{Join}\big(\id[\crr{[\bs s]}],\ana{\dr a}\,\big)\Big)\tuple{\bs s}{x}=
        \big(\id[\crr{[\bs s]}]\times\,\ana{\dr a}\,\big)\tuple{\bs s}{x}=\tuple{\bs s}{P}.
\end{equation}
Using the equation
\[
    \ana{\fnr{Join}([\bs s],\dr a)}=\ana{\fnr{Join}([\bs s],\fic{\D})}\circ
        \Big(\fnr{Join}\big(\id[\crr{[\bs s]}],\ana{\dr a}\,\big)\Big)
\]
that follows from the definition of an anamorphism, one can derive from \eqref{eq:udet} the follows
\[
    \ana{\fnr{Join}([\bs s],\dr a)}\tuple{\bs s}{x}=\ana{\fnr{Join}([\bs s],\fic{\D})}\tuple{\bs s}{P}.
\]
Considering the last equation holds whenever $P=\app{\ana{\dr a}}{x}$, one can conclude that $S=\constr{\dr a}{x}=\constr{\fic{\D}}{\strut P}$.
\end{proof}
\begin{corollary}
A subset $S\subset\Ntf^{\N}$ is a safety constraint if and only if it is equal to $\{\bs{s}\in\Ntf^{\N}\mid\bs{s}\slice{0}{m}\notin P\text{ for any }m>0\}$ for some prefix-free subset of $\Ntf^+$.
\end{corollary}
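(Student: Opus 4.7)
The plan is to view this corollary as an unfolding of Theorem~\ref{thm:safety-coalgebraically} (about the universal detector), expressing the condition $S=\constr{\fic{\D}}{P}$ in purely combinatorial terms involving $P$. By that theorem, $S\subset\Ntf^{\N}$ is a safety constraint iff $S=\constr{\fic{\D}}{P}$ for some $P\in\crr{\fic{\D}}$, i.e., for some prefix-free $P\subset\Ntf^+$. So it suffices to show that, for prefix-free $P$,
\[
    \constr{\fic{\D}}{P}=\{\bs{s}\in\Ntf^{\N}\mid\bs{s}\slice{0}{m}\notin P\text{ for any }m>0\}.
\]

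First, I would rewrite the left-hand side using Lemma~\ref{lem:basic}: the definition of $\constr{\fic{\D}}{P}$ together with the explicit form of $\ana{}$ on a $\fnr{T}$-system (see Subsec.~\ref{subsec:T-system}) gives that $\bs{s}\in\constr{\fic{\D}}{P}$ iff $(\fic{\D})^+\bigl(P,\bs{s}\slice{0}{m}\bigr)\neq\fault$ for every $m>0$. So the goal reduces to the following purely syntactic claim about $\fic{\D}$, which I would prove as an auxiliary lemma:

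\emph{Claim.} For any prefix-free $P\subset\Ntf^+$ and any $\bs{u}\in\Ntf^+$, one has $(\fic{\D})^+(P,\bs{u})\neq\fault$ if and only if $\bs{u}\slice{0}{k}\notin P$ for every $0<k\leq|\bs{u}|$; moreover, in that case $(\fic{\D})^+(P,\bs{u})=\bs{u}^{-1}\cdot P:=\{\bs{w}\in\Ntf^\ast\mid\bs{u}\bs{w}\in P\}$.

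The proof of the claim is by induction on $|\bs{u}|$. The base case $|\bs{u}|=1$ is immediate from the very definition of $\fic{\D}$: $(\fic{\D})^+(P,n)=\fic{\D}(P)(n)$ is $\fault$ iff $n\in P$ and equals $n^{-1}\cdot P$ otherwise. For the step, write $\bs{u}=\bs{v}n$ and split into two cases: if some prefix of $\bs{v}$ of length in $\{1,\dots,|\bs{v}|\}$ lies in $P$, then by induction $(\fic{\D})^+(P,\bs{v})=\fault$, hence $(\fic{\D})^+(P,\bs{u})=\fault$ by definition of $\dr a^+$ (with $\dr a=\fic{\D}$); otherwise, by induction $(\fic{\D})^+(P,\bs{v})=\bs{v}^{-1}\cdot P$, which by iterated use of Lemma~\ref{lem:prefix-free} is itself a prefix-free subset of $\Ntf^+$, so $\fic{\D}(\bs{v}^{-1}\cdot P)(n)$ is $\fault$ precisely when $n\in\bs{v}^{-1}\cdot P$, i.e.\ when $\bs{u}=\bs{v}n\in P$, and otherwise equals $n^{-1}\cdot(\bs{v}^{-1}\cdot P)=\bs{u}^{-1}\cdot P$.

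Granting the claim, the corollary follows by a one-line rearrangement: since every prefix of $\bs{s}\slice{0}{m}$ of length $\geq 1$ has the form $\bs{s}\slice{0}{k}$ with $0<k\leq m$, the condition ``$\bs{s}\slice{0}{m}\slice{0}{k}\notin P$ for all $m>0$ and $0<k\leq m$'' is equivalent to ``$\bs{s}\slice{0}{m}\notin P$ for all $m>0$.'' The main obstacle I anticipate is the bookkeeping in the claim: one has to track carefully that prefix-freeness is preserved by the operation $P\mapsto n^{-1}\cdot P$ (handled by Lemma~\ref{lem:prefix-free}) and that the two conditions ``$(\fic{\D})^+(P,\bs{v})=\fault$'' and ``$\bs{v}\in P$'' interact correctly when $\bs{v}$ appears as a proper prefix of a longer word; identifying $\bs{u}^{-1}\cdot P$ as the invariant that propagates through the induction is what makes the inductive step go through cleanly.
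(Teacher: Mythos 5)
Your proposal is correct and follows essentially the same route as the paper: the paper's proof of this corollary is just a one-line appeal to the universal-detector theorem (cited there as Theorem~\ref{thm:universality}, effectively Theorem~\ref{thm:safety-coalgebraically}), and your argument is exactly that reduction, with the identification $\constr{\fic{\D}}{\strut P}=\{\bs{s}\in\Ntf^{\N}\mid\bs{s}\slice{0}{m}\notin P\text{ for all }m>0\}$ spelled out via Lemma~\ref{lem:basic} and an induction showing $(\fic{\D})^+(P,\bs{u})=\bs{u}^{-1}\cdot P$ unless some nonempty prefix of $\bs{u}$ lies in $P$. You merely make explicit the unfolding (including the use of Lemma~\ref{lem:prefix-free} to keep the iterated quotients in $\crr{\fic{\D}}$) that the paper treats as immediate.
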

\begin{proof}
It follows immediately from Theorem~\ref{thm:universality}.
\end{proof}
\begin{corollary}
For any $\Ntf$-detectors $\dr a$ and $\dr b$ and $x\in\crr{\dr a}$ and $y\in\crr{\dr b}$, the equation  $\app{\ana{\dr a}}{x}=\app{\ana{\dr b}}{y}$ is sufficient for $\constr{\dr a}{x}=\constr{\dr b}{y}$.
\end{corollary}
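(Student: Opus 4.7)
The plan is to reduce this corollary directly to the factorisation argument established inside the proof of Theorem~\ref{thm:safety-coalgebraically}. That proof contains the key reusable identity: for any $\Ntf$-detector $\dr a$ and $x\in\crr{\dr a}$, setting $P=\app{\ana{\dr a}}{x}$ yields
\[
    \constr{\dr a}{x}=\constr{\fic{\D}}{\strut P},
\]
because $\fnr{Join}\big(\id[\crr{[\bs s]}],\ana{\dr a}\,\big)$ is a $\fnr{T}$-morphism from $\fnr{Join}([\bs s],\dr a)$ into $\fnr{Join}([\bs s],\fic{\D})$, and uniqueness of the anamorphism into the final $\fnr{T}$-coalgebra forces $\ana{\fnr{Join}([\bs s],\dr a)}=\ana{\fnr{Join}([\bs s],\fic{\D})}\circ\fnr{Join}\big(\id[\crr{[\bs s]}],\ana{\dr a}\,\big)$, from which the values at $\tuple{\bs s}{x}$ and $\tuple{\bs s}{P}$ agree.

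Granting that identity, the corollary is immediate. I would apply it twice: first to $(\dr a,x)$ and then to $(\dr b,y)$. Letting $P:=\app{\ana{\dr a}}{x}=\app{\ana{\dr b}}{y}$ be the common value, we obtain
\[
    \constr{\dr a}{x}=\constr{\fic{\D}}{\strut P}=\constr{\dr b}{y},
\]
which is exactly the required conclusion.

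There is essentially no obstacle here; the hard work has already been done inside Theorem~\ref{thm:safety-coalgebraically}. The only thing to be careful about is to state explicitly that the same prefix-free subset $P$ appears on both sides, so that invoking the theorem's identity twice yields a chain of equalities rather than two disconnected claims. Accordingly, the written proof will be a one-line deduction citing the relevant equation from the proof of Theorem~\ref{thm:safety-coalgebraically}.
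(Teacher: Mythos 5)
Your proposal is correct and matches the paper's own argument: the paper likewise justifies the corollary by the construction of $P=\app{\ana{\dr a}}{x}$ in the proof of Theorem~\ref{thm:safety-coalgebraically}, which gives $\constr{\dr a}{x}=\constr{\fic{\D}}{\strut P}$, applied to both $(\dr a,x)$ and $(\dr b,y)$ with the common value $P$. Your write-up simply makes the paper's terse citation explicit.
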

\begin{proof}
The statement is true due to the construction method of $P$ used in the proof of Theorem~\ref{thm:safety-coalgebraically} and Theorem~\ref{thm:universality}.
\end{proof}
\subsection{Families of safety Constraints}
Theorem~\ref{thm:safety-coalgebraically} proved in the previous subsection establishes that for the family of all safety constraints, there is a universal detector, i.e. a detector that recognises any safety constraint of the family when the detector configured appropriately.
Such a general result, however, cannot be used in the practice of developing software tools, if only because computability considerations limit our expressive capabilities for specifying safety constraints and, in particular, guarantee the impossibility of specifying an arbitrary prefix-free set.
Therefore, we need to consider more specific families of safety constraints.
In this subsection, we try to outline some general approach to solving this problem.

We begin with the next definition.
\begin{definition}
A family of safety constraints $\mathcal{F}$ is below called a \AT{family with a universal detector} if $\mathcal F=\{\constr{\dr a}{x}\mid x\in\crr{\dr a}\}$ for some $\Ntf$-detector $\dr a$  being called in this case a \AT{universal detector} for $\mathcal F$.
\end{definition}
\begin{theorem}\label{thm:universality}
A family of safety constraints $\mathcal{F}$ is a family with a universal detector if and only if there exists $C\subset\crr{\fic{\D}}$ that meets the following conditions
\begin{subequations}\label{eq:universality}
\begin{align}
    &\text{for any }P\in C,\quad%
        \text{either}\quad n\in P\quad\text{or}\quad%
        n^{-1}\cdot P\in C\quad\text{for an arbitrary }n\in\Ntf\label{eq:universality-1} \\
    &\text{for any safety constraint }S,\quad%
        S\in\mathcal F\quad\text{if and only if there exists}\quad P\in C\quad\text{such that}\quad%
        S=\constr{\fic{\D}}{\strut P}\label{eq:universality-2}
\end{align}
\end{subequations}
\end{theorem}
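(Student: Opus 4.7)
The plan is to exploit Proposition~\ref{prop:subcoalgebra}, which identifies condition~\eqref{eq:universality-1} as precisely the criterion for a subset of $\crr{\fic{\D}}$ to carry a subcoalgebra of $\fic{\D}$, together with the last corollary of Theorem~\ref{thm:safety-coalgebraically} asserting $\constr{\dr a}{x} = \constr{\fic{\D}}{\strut \app{\ana{\dr a}}{x}}$. The guiding idea is that a universal detector for $\mathcal F$ may as well be replaced by a subcoalgebra of $\fic{\D}$ (namely the image of its anamorphism), and condition~\eqref{eq:universality-1} captures exactly the property of being such a subcoalgebra.

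For the ``only if'' direction, suppose $\dr a$ is a universal detector for $\mathcal F$, and set $C = \{\app{\ana{\dr a}}{x} \mid x \in \crr{\dr a}\}$. To verify condition~\eqref{eq:universality-1}, I would unfold the $\D$-morphism equation $\fic{\D}\circ\ana{\dr a} = \D\ana{\dr a}\circ\dr a$ pointwise: evaluating at $x$ and then at $n \in \Ntf$ shows that, for $P = \app{\ana{\dr a}}{x}$, one has $n \in P$ iff $\app{(\dr a x)}{n} = \fault$, and otherwise $n^{-1}\cdot P = \app{\ana{\dr a}}{(\app{(\dr a x)}{n})} \in C$. Condition~\eqref{eq:universality-2} then follows from $\mathcal F = \{\constr{\dr a}{x} \mid x \in \crr{\dr a}\}$ via the above corollary applied to each $x$: every $S \in \mathcal F$ has the form $\constr{\fic{\D}}{\strut P}$ for some $P \in C$, and conversely every $P \in C$ yields a member of $\mathcal F$.

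For the ``if'' direction, suppose $C \subset \crr{\fic{\D}}$ satisfies~\eqref{eq:universality-1} and~\eqref{eq:universality-2}. By Proposition~\ref{prop:subcoalgebra} the inclusion $j_C : C \to \crr{\fic{\D}}$ lifts to a subcoalgebra $\dr a_C$ of $\fic{\D}$. Since $\fic{\D}$ is final and $j_C$ is a detector morphism $\dr a_C \to \fic{\D}$, uniqueness of the anamorphism forces $\ana{\dr a_C} = j_C$, and hence $\app{\ana{\dr a_C}}{P} = P$ for each $P \in C$. The corollary then yields $\constr{\dr a_C}{P} = \constr{\fic{\D}}{\strut P}$, and~\eqref{eq:universality-2} rewrites as $\mathcal F = \{\constr{\dr a_C}{P} \mid P \in C\}$, exhibiting $\dr a_C$ as a universal detector for $\mathcal F$.

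The main obstacle I anticipate is the pointwise unwinding of the morphism equation in the ``only if'' direction: it requires carefully applying the definition of $\D$ on morphisms (via $\ext{(-)}$) and matching the fault-case split of $\fic{\D}$ to that of $\dr a$. Once that identification is made, both directions collapse to straightforward bookkeeping and no further coalgebraic machinery beyond Proposition~\ref{prop:subcoalgebra} and the corollary of Theorem~\ref{thm:safety-coalgebraically} is needed.
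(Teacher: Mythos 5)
Your proposal is correct and takes essentially the same route as the paper: the paper's own (much terser) proof likewise gets the ``only if'' direction from Proposition~\ref{prop:subcoalgebra} applied to the image of the anamorphism together with the identity $\constr{\dr a}{x}=\constr{\fic{\D}}{\strut\app{\ana{\dr a}}{x}}$ from Theorem~\ref{thm:safety-coalgebraically}, and for the ``if'' direction it defines the universal detector exactly as your $\dr a_C$, namely the restriction of $\fic{\D}$ to $C$. Your write-up just makes explicit the steps the paper leaves implicit (the pointwise unwinding of the detector-morphism equation and the finality argument giving $\ana{\dr a_C}=j_C$).
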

\begin{proof}
Firstly, let us assume the family $\mathcal F$ is a family with a universal detector then Prop.~\ref{prop:subcoalgebra} guarantees the validity of conditions \eqref{eq:universality}.
\\
Conversely, let us define the next mapping $\dr a:C\to\D C$
\[
    \dr a=\labs{P\in C}{\labs{n\in\Ntf}{\left\lbrace\begin{array}{cc}
        \fault & \text{if }n\in P \\
        n^{-1}\cdot P & \text{otherwise}
    \end{array}\right.}}
\]
The correctness of this definition is ensured by \eqref{eq:universality-1}, hence $\dr a$ is an $\Ntf$-detector and, due to \eqref{eq:universality-2}, it is a universal detector.
\end{proof}
\begin{corollary}
A family of safety constraints $\mathcal{F}$ is a family with a universal detector if and only if
\[
    \{P\in\fic{\D}\mid\constr{\fic{\D}}{P}\in\mathcal F\}
\]
is the carrier of a subcoalgebra of $\fic{\D}$.
\end{corollary}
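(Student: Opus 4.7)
Setting $C := \{P \in \fic{\D} \mid \constr{\fic{\D}}{\strut P} \in \mathcal F\}$, I plan to deduce the corollary by combining Prop.~\ref{prop:subcoalgebra} with Theorem~\ref{thm:universality}. By Prop.~\ref{prop:subcoalgebra}, $C$ is the carrier of a subcoalgebra of $\fic{\D}$ if and only if condition~\eqref{eq:universality-1} holds for $C$; thus the corollary reduces to proving that this closure property on $C$ is equivalent to the existence of a universal detector for $\mathcal F$.

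For the ``if'' direction, suppose $C$ satisfies~\eqref{eq:universality-1}. I will verify condition~\eqref{eq:universality-2} for $C$ directly: for any safety constraint $S$, Theorem~\ref{thm:safety-coalgebraically} supplies some $P \in \fic{\D}$ with $S = \constr{\fic{\D}}{\strut P}$, and by the very definition of $C$ this $P$ belongs to $C$ precisely when $S \in \mathcal F$. Hence both clauses of Theorem~\ref{thm:universality} hold for $C$, which produces a universal detector for $\mathcal F$.

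For the ``only if'' direction, suppose $\mathcal F$ has a universal detector. Theorem~\ref{thm:universality} then supplies some $C_0 \subseteq \fic{\D}$ fulfilling~\eqref{eq:universality-1} and~\eqref{eq:universality-2}, and from~\eqref{eq:universality-2} one immediately has $C_0 \subseteq C$. To derive~\eqref{eq:universality-1} for $C$ itself, I would take $P \in C$ and $n \in \Ntf$ with $n \notin P$, and exploit the identity $\constr{\fic{\D}}{n^{-1} \cdot P} = \{\bs s \in \Ntf^{\N} \mid n\bs s \in \constr{\fic{\D}}{\strut P}\}$ --- an immediate consequence of the characterisation of $\constr{\fic{\D}}{\strut P}$ as the streams whose prefixes avoid $P$ (cf.\ the preceding corollary) --- in order to transfer membership in $\mathcal F$ from some chosen representative $P_0 \in C_0$ with $\constr{\fic{\D}}{P_0} = \constr{\fic{\D}}{\strut P}$, invoking the closure of $C_0$ under $n^{-1} \cdot (-)$.

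The main obstacle I anticipate lies in reconciling the gap between $C$ and $C_0$: because distinct prefix-free sets may represent the same safety constraint, a chosen $P_0 \in C_0$ witnessing $\constr{\fic{\D}}{\strut P} \in \mathcal F$ need not satisfy $n \notin P_0$, in which case~\eqref{eq:universality-1} for $C_0$ yields no immediate successor at $n$. Addressing this will require either producing an equivalent representative $P_0' \in C_0$ with $n \notin P_0'$ (obtained by ``expanding'' the letter $n$ one level using the closure structure of $C_0$) or, failing that, arguing that $\constr{\fic{\D}}{n^{-1}\cdot P}$ must already lie in $\mathcal F$ because the ambient constraint rules out every stream starting with $n$.
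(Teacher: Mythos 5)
Your ``if'' direction is fine and is indeed the easy half: with $C:=\{P\in\crr{\fic{\D}}\mid\constr{\fic{\D}}{\strut P}\in\mathcal F\}$, condition~\eqref{eq:universality-2} holds by the very definition of $C$ (using Theorem~\ref{thm:safety-coalgebraically} to produce a representative), so Prop.~\ref{prop:subcoalgebra} plus Theorem~\ref{thm:universality} give a universal detector. The paper itself offers no argument beyond this reduction, and that is precisely where the trouble lies: Theorem~\ref{thm:universality} only yields \emph{some} subcoalgebra carrier $C_0$ whose image under $P\mapsto\constr{\fic{\D}}{\strut P}$ is $\mathcal F$; it does not say that the full preimage $C$ is closed under the derivative operation, and since $P\mapsto\constr{\fic{\D}}{\strut P}$ is not injective, $C$ can be strictly larger than any such $C_0$.

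The obstacle you flag in the ``only if'' direction is not merely a technical nuisance --- as stated, that direction is false, so neither of your two proposed repairs can succeed in general. Take $\Ntf=\{a,b\}$ and the one-state detector $\dr d$ with $\app{(\dr dx)}{a}=\fault$ and $\app{(\dr dx)}{b}=x$; then $\constr{\dr d}{x}=\{b^{\omega}\}$ (the constant-$b$ stream), so $\mathcal F=\{\{b^{\omega}\}\}$ is a family with a universal detector and $\varnothing\notin\mathcal F$. The prefix-free set $P'=\{aa,ab\}\cup\{b^{k}a\mid k\geq1\}$ also satisfies $\constr{\fic{\D}}{\strut P'}=\{b^{\omega}\}$, hence $P'\in C$, and $a\notin P'$; but $a^{-1}\cdot P'=\{a,b\}=\Ntf$ and $\constr{\fic{\D}}{\strut\Ntf}=\varnothing\notin\mathcal F$, so $a^{-1}\cdot P'\notin C$ and, by Prop.~\ref{prop:subcoalgebra}, $C$ is not the carrier of a subcoalgebra. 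Your first repair fails here because the only representative of $\{b^{\omega}\}$ available in $C_0=\{\app{\ana{\dr d}}{x}\}=\{\{a,ba,bba,\dots\}\}$ contains the letter $a$, and your second fails because it needs $\varnothing\in\mathcal F$. What is true (and provable by your own case analysis) is that the failure can only occur through an empty derivative: if $n\notin P$ and some stream of $\constr{\fic{\D}}{\strut P}$ starts with $n$, then $\constr{\fic{\D}}{n^{-1}\cdot P}=\constr{\dr a}{\app{(\dr ax)}{n}}\in\mathcal F$. So the corollary holds under the additional hypothesis $\varnothing\in\mathcal F$, or it should be reformulated as: $\mathcal F$ has a universal detector iff $\mathcal F=\{\constr{\fic{\D}}{\strut P}\mid P\in C\}$ for \emph{some} subcoalgebra carrier $C$ of $\fic{\D}$, which is just Theorem~\ref{thm:universality} combined with Prop.~\ref{prop:subcoalgebra}.
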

Now we consider three specific safety constraint family and demonstrate that each of them is a family with a universal detector.
\subsubsection{Regular Safety Constraints}
Here we consider the safety constraint family $\mathcal R$ formed as follows a safety constraint $S$ belongs to $\mathcal R$ if and only if there exists a detector $\dr a$ with the finite carrier such that $S=\constr{\dr a}{x}$ for some $x\in\crr{\dr a}$.
We call a safety constraint belonging this family a \AT{regular safety constraint}.
\begin{theorem}
The family of regular safety constraints is a family with a universal detector.
\end{theorem}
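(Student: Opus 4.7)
The plan is to apply Theorem~\ref{thm:universality} with the subset
\[
    C = \{\app{\ana{\dr a}}{x}\mid \dr a\text{ is an }\Ntf\text{-detector with finite carrier and }x\in\crr{\dr a}\}
\]
of $\crr{\fic{\D}}$. The motivation comes from the proof of Theorem~\ref{thm:safety-coalgebraically}, which already establishes the identity $\constr{\dr a}{x}=\constr{\fic{\D}}{\app{\ana{\dr a}}{x}}$ and therefore translates the definition of a regular safety constraint into a property of points of $\crr{\fic{\D}}$.

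The closure condition~\eqref{eq:universality-1} would be verified as follows. Fix $P=\app{\ana{\dr a}}{x}\in C$ for some finite detector $\dr a$ and $x\in\crr{\dr a}$, and suppose $n\in\Ntf$ with $n\notin P$. Because $\ana{\dr a}$ is a detector morphism into $\fic{\D}$, unwinding the definitions of $\ana{\dr a}$ and $\fic{\D}$ forces $\app{(\dr ax)}{n}\neq\fault$ and $n^{-1}\cdot P=\app{\ana{\dr a}}{(\app{(\dr ax)}{n})}$. Since $\app{(\dr ax)}{n}$ is another element of the same finite carrier $\crr{\dr a}$, this puts $n^{-1}\cdot P$ back into $C$.

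For condition~\eqref{eq:universality-2}, a safety constraint $S$ is regular by definition iff $S=\constr{\dr a}{x}$ for some finite detector $\dr a$ and $x\in\crr{\dr a}$; by the identity recalled above this is in turn equivalent to $S=\constr{\fic{\D}}{\strut P}$ for some $P\in C$. The universal detector for the family of regular safety constraints is then the one assembled from $C$ in the converse direction of the proof of Theorem~\ref{thm:universality}. Note that this universal detector itself need not have finite carrier---$C$ is typically infinite---which is harmless, since the definition of a family with a universal detector imposes no finiteness restriction on the detector itself.

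I do not expect a substantive obstacle: the argument is mostly bookkeeping once the correspondence between anamorphisms into $\fic{\D}$ and safety constraints is in hand. The one point requiring care is the observation that the partial operation $y\mapsto\app{(\dr a y)}{n}$ sends $\crr{\dr a}$ into itself, so the \textquotedblleft take a successor in $\fic{\D}$\textquotedblright\ operation, when restricted to $C$, remains representable by the same finite detector $\dr a$.
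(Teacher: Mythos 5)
Your proof is correct, but it takes a genuinely different route from the paper's. You verify the closure condition \eqref{eq:universality-1} internally to the coalgebraic framework: from the detector-morphism property of $\ana{\dr a}$ you get that $n\notin P=\app{\ana{\dr a}}{x}$ forces $\app{(\dr ax)}{n}\neq\fault$ and $n^{-1}\cdot P=\app{\ana{\dr a}}{\big(\app{(\dr ax)}{n}\big)}$, so the quotient is witnessed by another point of the \emph{same} finite carrier; condition \eqref{eq:universality-2} then follows from the identity $\constr{\dr a}{x}=\constr{\fic{\D}}{\strut\app{\ana{\dr a}}{x}}$ extracted from the proof of Theorem~\ref{thm:safety-coalgebraically}. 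The paper instead passes through formal-language theory: it identifies the elements of $C$ as regular prefix-free subsets of $\Ntf^+$ (citing an external characterisation of anamorphism images of finite detectors) and proves closure of $C$ under $P\mapsto n^{-1}\cdot P$ by modifying a finite Eilenberg machine accepting $P$ so that it accepts the left quotient. Your argument is more elementary and self-contained---no automata construction and no external citation is needed, and the finiteness of $\crr{\dr a}$ is used only through the trivial observation that $\app{(\dr ax)}{n}$ lies in the same carrier---while the paper's route buys the additional information that the carrier of the resulting subcoalgebra of $\fic{\D}$ consists exactly of the regular prefix-free languages, which ties the family $\mathcal R$ to classical automata-theoretic descriptions. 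Your closing remark that the universal detector assembled from $C$ need not itself be finite is apt and consistent with the definition.
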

\begin{proof}
Let us consider $C=\{P\subset\Ntf^+\mid P=\app{\ana{a}}{x}\text{ for some detector }\dr a%
\text{ with finite carrier and }x\in\crr{\dr a}\}$ then any $P\in C$ is a regular prefix-free subset of $\Ntf^+$ (see, for example, \cite[Theorem~1]{bib:zholtkevych-polyakovska}).
\\
Now let us check that $C$ satisfies conditions \eqref{eq:universality}.
Indeed, Lemma~\ref{lem:prefix-free} ensures $n^{-1}\cdot P$ is prefix-free for any $n\in\Ntf$ such that $n\notin P$.
\\
Hence, we need only due to Theorem~\ref{thm:universality} to check that $n^{-1}\cdot P$ is a regular set under assumption $n\notin P$.
To do this we consider a finite Eilenberg machine $\langle Q,T\subset Q\times\Ntf\times Q,I\subset Q,F\subset Q\rangle$ that accepts only words from $P$ (see \cite{bib:eilenberg}) and construct the new finite Eilenberg machine $\langle Q',T'\subset Q'\times\Ntf\times Q',I'\subset Q',F\subset Q'\rangle$ where $Q'=Q+\{q^\ast\}$ and $T'=T+\{\langle q^\ast,n,q\rangle\mid\langle q',n,q\rangle\text{ for some }q'\in I\}$ and $q^\ast\notin Q$.
It is evident that the constructed machine accepts exactly $n^{-1}\cdot P$.
\end{proof}
\subsubsection{Decidable Safety Constraints}
Here we consider the safety constraint family $\mathcal D$ formed as follows a safety constraint $S$ belongs to $\mathcal D$ if and only if there exists $P\in\crr{\fic{\D}}$ such that $P$  is decidable and $S=\constr{\fic{\D}}{\strut P}$.
We call a safety constraint belonging this family a \AT{decidable safety constraint}.
\begin{theorem}
The family of decidable safety constraints is a family with a universal detector.
\end{theorem}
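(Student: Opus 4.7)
The plan is to invoke Theorem~\ref{thm:universality} with the witness set
\[
    C = \{P \in \crr{\fic{\D}} \mid P \text{ is a decidable subset of } \Ntf^+\}.
\]
Condition~\eqref{eq:universality-2} then holds automatically, since the definition of $\mathcal D$ says precisely that its elements are the safety constraints of the form $\constr{\fic{\D}}{\strut P}$ for some decidable $P \in \crr{\fic{\D}}$. So the entire substantive work is to verify that $C$ satisfies~\eqref{eq:universality-1}.

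For~\eqref{eq:universality-1}, the plan is to fix $P \in C$ and $n \in \Ntf$ with $n \notin P$, and to show that $n^{-1}\cdot P$ again lies in $C$. Prefix-freeness of $n^{-1}\cdot P$ is immediate from Lemma~\ref{lem:prefix-free}, so the only new ingredient is decidability. This is essentially free: from a decision procedure for $P$, one obtains a decision procedure for $n^{-1}\cdot P$ by the trivial reduction $\bs u \mapsto n\bs u$, i.e.\ by querying the oracle for $P$ on the word $n\bs u$. Thus $n^{-1}\cdot P \in C$, which completes the verification of~\eqref{eq:universality-1} and, via Theorem~\ref{thm:universality}, produces a universal detector for $\mathcal D$ whose carrier is $C$ with transition map sending $(P,n)$ to $\fault$ if $n\in P$ and to $n^{-1}\cdot P$ otherwise.

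The only delicate point I anticipate is making the word \emph{decidable} rigorous in this setting: one must fix an effective encoding of $\Ntf^+$ (since $\Ntf$ is a finite set with at least two elements, the standard encoding of words over a finite alphabet does the job) so that the classical notions of decidability and Turing computability apply to subsets of $\Ntf^+$. Once this convention is in place, the closure of decidable sets under the substitution $\bs u \mapsto n\bs u$ is an elementary observation, and no further machinery is needed; in particular the main obstacle is conceptual rather than computational.
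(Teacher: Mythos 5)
Your proof is correct and follows essentially the same route as the paper: both reduce the statement to Theorem~\ref{thm:universality} and verify closure of the decidable prefix-free subsets of $\Ntf^+$ under the quotient $P\mapsto n^{-1}\cdot P$ (for $n\notin P$) by deciding $\bs{u}\in n^{-1}\cdot P$ via the query $n\bs{u}\in P$. If anything, your write-up is the more careful one, since the paper's text nominally cites condition~\eqref{eq:universality-2} while the argument it actually gives is exactly your verification of~\eqref{eq:universality-1}.
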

\begin{proof}
To prove the theorem is sufficient to check that the family of a decidable prefix-free subset of $\Ntf^+$ satisfies condition \eqref{eq:universality-2}.
But this is really true because one can check the statement $\bs{u}\in n^{-1}\cdot P$ for a decidable prefix-free subset $P$ of $\Ntf^+$, any $\bs{u}\in\Ntf^+$, and $n\in\Ntf$ such that $n\notin P$, by checking $n\bs{u}\in P$ with a decision procedure for $P$.
\end{proof}
\subsubsection{Recursively Enumerable Safety Constraints}
Here we consider the safety constraint family $\mathcal{RE}$ formed as follows a safety constraint $S$ belongs to $\mathcal{RE}$ if and only if there exists $P\in\crr{\fic{\D}}$ such that $P$  is recursively enumerable and $S=\constr{\fic{\D}}{\strut P}$.
We call a safety constraint belonging this family a \AT{recursively enumerable safety constraint}.
\begin{theorem}
The family of recursively enumerable safety constraints is a family with a universal detector.
\end{theorem}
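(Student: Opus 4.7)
The plan is to mirror the strategy used for regular and decidable safety constraints: take $C$ to be the collection of all recursively enumerable prefix-free subsets of $\Ntf^+$, view $C$ as a subset of $\crr{\fic{\D}}$, and apply Theorem~\ref{thm:universality}. Condition \eqref{eq:universality-2} will then be essentially the definition of $\mathcal{RE}$, so the real work is to verify the closure condition \eqref{eq:universality-1}, i.e.\ that for every $P\in C$ and every $n\in\Ntf$ with $n\notin P$, the set $n^{-1}\cdot P$ again lies in $C$.

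First I would invoke Lemma~\ref{lem:prefix-free}: under the hypothesis $n\notin P$, this lemma already guarantees $n^{-1}\cdot P\subset\Ntf^+$ and that $n^{-1}\cdot P$ is prefix-free. So the only thing left to check is that recursive enumerability is preserved when passing from $P$ to $n^{-1}\cdot P$. For this I would fix a semi-decision procedure (equivalently, an enumerator) for $P$ and describe how to obtain one for $n^{-1}\cdot P$: given $\bs{u}\in\Ntf^+$, run the semi-decider for $P$ on the input $n\bs{u}$ and accept exactly when it accepts; equivalently, enumerate $P$ and for each enumerated $\bs{v}$ of length at least $2$ beginning with $n$, output $\bs{v}\slice{1}{}$. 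Either description shows that $n^{-1}\cdot P$ is recursively enumerable, hence belongs to $C$.

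Having verified \eqref{eq:universality-1}, I would then observe that \eqref{eq:universality-2} holds by construction, since by definition $S\in\mathcal{RE}$ iff $S=\constr{\fic{\D}}{P}$ for some recursively enumerable $P\in\crr{\fic{\D}}$, i.e.\ for some $P\in C$. An application of Theorem~\ref{thm:universality} then produces the universal detector $\dr a_{\mathcal{RE}}\colon C\to\D C$ defined by the same formula used in the proof of that theorem, namely $\dr a_{\mathcal{RE}}P=\labs{n\in\Ntf}{\fault}$ when $n\in P$ and $n^{-1}\cdot P$ otherwise.

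I do not anticipate a genuine obstacle: the argument is formally parallel to the decidable case, and the key computability content is just the fact that the class of recursively enumerable sets is closed under the left quotient $P\mapsto n^{-1}\cdot P$. The only point that deserves care is that \eqref{eq:universality-1} must hold for \emph{every} $n\in\Ntf$, including those with $n\in P$, but in that branch the condition is trivially satisfied, so no additional hypothesis (such as decidability of membership for $n$) is required.
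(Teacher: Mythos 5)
Your proposal is correct and follows essentially the same route as the paper: both reduce the theorem to showing that the recursively enumerable prefix-free subsets of $\Ntf^+$ form a set $C\subset\crr{\fic{\D}}$ satisfying conditions \eqref{eq:universality}, where \eqref{eq:universality-2} is immediate from the definition of $\mathcal{RE}$ and the substance is closure of recursive enumerability under $P\mapsto n^{-1}\cdot P$ for $n\notin P$, after which Theorem~\ref{thm:universality} yields the universal detector. The only difference is in that computability step: you semi-decide $\bs{u}\in n^{-1}\cdot P$ by running a semi-decider for $P$ directly on $n\bs{u}$ (or by enumerating $P$ and emitting tails of words beginning with $n$), whereas the paper dovetails checks of all prefixes $n\bs{u}\slice{0}{k}\in P$ and appeals to prefix-freeness; your simpler procedure is equally valid and establishes the same closure fact.
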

\begin{proof}
Similarly to proof of the previous theorem, we need to furnish a semi-decision procedure for checking the statement $\bs{u}\in n^{-1}\cdot P$ with a recursively enumerable prefix-free subset $P$ of $\Ntf^+$, any $\bs{u}\in\Ntf^+$, and $n\in\Ntf$ such that $n\notin P$.
This procedure is as follows
\begin{enumerate}
    \item run parallelly checkings $n\bs{u}\slice{0}{k}\in P$ for $k=1,\ldots,|\bs u|$;
    \item wait for any of these runs to halt;
    \item report success if the halting run is the run for checking $n\bs{u}\in P$.
\end{enumerate}
An event waited in item~(2) may do not happen if $n\bs{u}\notin P$ i.e. $\bs{u}\notin n^{-1}\cdot P$.
\\
If the wait in paragraph (2) is completed then only for one $0<k<|\bs u|$ the corresponding checking is completed due to $P$ is prefix-free.
\\
$\bs{u}\in n^{-1}\cdot P$ or equivalently $n\bs{u}\in P$ if and only if the checking with number $|\bs u|-1$ is halted.
\\
In other cases, $n\bs{u}\notin P$ i.e. $\bs{u}\notin n^{-1}\cdot P$.
\\
Thus, the furnished procedure is really a semi-decision procedure for the statement $\bs{u}\in n^{-1}\cdot P$.
\end{proof}

\section{Conclusion}
Summing up the mentioned above we can conclude that subdetectors of a final detector correspond to families of safety constraints with universal detectors.
These families are candidates for semantic models of domain-specific languages for specification of safety constraints.

Of course, the obtained results tell nothing how to construct such languages.
But they turn this problem into more precisely defined.

We can identify as problems for further studying the follows
\begin{enumerate}
    \item Give syntactic characterisation of regular safety constraints family.
    \item Understand what classes of grammars (for example, context-free) define proper families of safety constraints.
    \item Can we use complexity classes for defining families of safety constraints with universal detectors?
    \item Understand how compositional theory of detectors can be developed.
\end{enumerate}

An especial area of further research is the dissemination of the proposed approach for the specification and analysis of causality constraints, formulated in terms of the logical clock model.
Now authors are working on the paper ``Understanding Logical Clock Model Coalgebraically''.
\bibliographystyle{unsrt}
\bibliography{references}
\end{document}